\documentclass[11pt]{article}
\usepackage[margin=1.0in]{geometry}

\usepackage{amsmath}
\usepackage{amssymb}
\usepackage{graphicx}
\usepackage{enumerate}
\usepackage{amsthm}
\usepackage{amsfonts}
\usepackage[all]{xy}
\usepackage{array}
\usepackage{datetime}
\usepackage{centernot}
\usepackage{tikz}
\usepackage[ruled,lined,linesnumbered]{algorithm2e}


\newcommand{\Z}{\mathbb{Z}}
\newcommand{\Q}{\mathbb{Q}}

\newcommand{\divs}{\mid}

\newcommand{\vect}[1]{\langle #1 \rangle}

\newcommand{\den}{\textnormal{den}}

\newcommand{\iterres}{\textnormal{iterres}}

\newcommand{\cgcd}{\textnormal{c-gcd}}

\newcommand{\mdeg}{\textnormal{mdeg}}

\newcommand{\res}{\textnormal{res}}

\newcommand{\lc}{\textnormal{lc}}
\newcommand{\denom}{\textnormal{denom}}

\newcommand{\mvar}{\textnormal{mvar}}

\newtheoremstyle{solutionstyle} 
    {\topsep}                    
    {\topsep}                    
    {}                           
    {}                           
    {\it}                   
    {:}                          
    {.5em}                       
    {}  

\theoremstyle{solutionstyle}

\newtheoremstyle{exercisestyle} 
    {\topsep}                    
    {\topsep}                    
    {}                           
    {}                           
    {\bf}                   
    {:}                          
    {.5em}                       
    {}  

\theoremstyle{exercisestyle}

\newdateformat{mydate}{{\THEDAY} {\monthname[\THEMONTH]} \THEYEAR}

\theoremstyle{definition}
\newtheorem{theorem}{Theorem}
\newtheorem{lemma}{Lemma}
\newtheorem{prop}{Proposition}

\newtheorem{corollary}{Corollary}

\theoremstyle{remark}
\newtheorem*{defn}{Definition}
\newtheorem{ex}{Example}

\begin{document}

\title{Resolving zero-divisors using Hensel lifting}
\date{}
\author{John Kluesner and Michael Monagan}
\maketitle

\begin{center}
Department of Mathematics, Simon Fraser University \\
Burnaby, British Columbia, V5A-1S6, Canada \\
\verb|jkluesne@sfu.ca  |  \verb|  mmonagan@sfu.ca|
\end{center}

%
\vspace{30pt}

\begin{abstract}
Algorithms which compute modulo triangular sets must respect the presence of zero-divisors. 
We present Hensel lifting as a tool for dealing with them. 
We give an application: a modular algorithm for computing GCDs of univariate polynomials with 
coefficients modulo a radical triangular set over $\Q$. 
Our modular algorithm naturally generalizes previous work from algebraic number theory. 
We have implemented our algorithm using Maple's
\textsc{recden} package. We compare our implementation with the procedure 
{\tt RegularGcd} in the {\tt RegularChains} package. 
\end{abstract}

\section{Introduction}

Suppose that we seek to find the greatest common divisor of two polynomials $a,b\in \Q(\alpha_1,\dots,\alpha_n)[x]$ where $\alpha_i$ are algebraic numbers. This problem was first solved using a modular algorithm by Langemyr and McCallum \cite{langmccal} and improved by Encarnacion \cite{enc}. Their solution first found a primitive element and then applied an algorithm for one extension. Monagan and van Hoeij \cite{hoeij} improved the multiple extension case by circumventing the primitive element.

The computational model used for an algebraic number field is the quotient ring $\Q[z_1,\dots,z_n]/T$ where $T = \vect{t_1(z_1),t_2(z_1,z_2),\dots,t_n(z_1,\dots,z_n)}$ and each $t_i$ is the minimal polynomial of $\alpha_i$,
hence irreducible, over $\Q(\alpha_1,\dots,\alpha_{i-1})$. 
A natural generalization, requested by Daniel Lazard at ISSAC 2002,
is to consider the same problem when each $t_i$ is possibly reducible
in which case $\Q[z_1,\dots,z_n]/T$ has zero-divisors.

The generators of $T$ form what is known as a triangular set. Let $R = \Q[z_1,\dots,z_n]/T$. This paper proposes a new algorithm for computing $\gcd(a,b)$ with $a,b\in R[x]$. The backbone of it is the Euclidean algorithm. However, the EA can't always be used in this ring. For example, suppose $R = \Q[z_1,z_2]/ T$ and $T = \vect{z_1^2+1, z_2^2 + 1}$. Notice that $z_1^2-z_2^2 = 0$ in $R$ hence
$z_1 - z_2$ and $z_1+z_2$ are zero-divisors in $R$. Consider computing the gcd of
\begin{align*}
a &= {x}^{4}+ \left( z_1+18\,z_2 \right) {x}^{3}+
 \left( - z_2+3\,z_1 \right) {x}^{2}+324\,x +323\\
b &= {x}^{3}+ \left( z_1+18\,z_2 \right) {x}^{2}+
 \left( -19\,z_2+2\,z_1 \right) x+324
\end{align*}
using the Euclidean algorithm. The remainder of $a \div b$ is
$$
r_1 = (z_1+18z_2)x^2 + 323.
$$
Since $z_1 + 18z_2$ is a unit, a division can be performed; dividing $b$ by $r_1$ gives
$$
r_2 = (z_1 - z_2)x + 1.
$$
The next step in the Euclidean algorithm would be to invert $z_1-z_2$, but it's a zero-divisor, so it cannot continue.
A correct approach would be to factor $z_2^2+1 = (z_2-z_1)(z_2+z_1) \pmod{z_1^2+1}$ to split the triangular set $T$ into $\{z_1^2+1, z_2-z_1\}$ and $\{z_1^2+1,z_2+z_1\}$. After that, finish the EA modulo each of these new triangular sets. It's possible to combine the results using the Chinese remainder theorem, but that is costly so it is common practice to instead return the output of the EA along with the associated triangular set. For example, see the definition of pseudo-gcd in \cite{hubert} and regular-gcd in \cite{gcdregchains}. We follow this trend with our definition componentwise-gcd in section 4.

Now, consider trying to compute $\gcd(a,b)$ above using a modular GCD algorithm. One would expect to hit the modular image of the same zero-divisor at each prime and hence one could combine them using Chinese remaindering and rational reconstruction. For instance, the EA modulo $13$ will terminate with the zero-divisor $z_1+12z_2 \pmod{13}$ as expected. However, running the EA modulo $17$ terminates earlier because $\lc(r_1) = z_1+18z_2 \equiv z_1+z_2 \pmod{17}$ is a zero-divisor. This presents a problem: $z_1+z_2 \pmod{17}$ and $z_1+12z_2\pmod{13}$ will never combine into a zero-divisor no matter how many more
primes are chosen.

To circumvent, our algorithm finds a monic zero-divisor and lifts it using Hensel lifting
to a zero-divisor over $\Q$.  Our technique handles both the expected zero-divisors (such as $z_1+12z_2 \pmod{ 13}$ in the above example) and the unexpected zero-divisors (such as $z_1+z_2 \pmod{ 17}$). A different approach that we tried is Abbott's fault tolerant rational reconstruction as described in \cite{ftrr};  although this is effective, we prefer Hensel lifting as it enables us to split the triangular set immediately thus
saving work.

In section 2, we review important properties of triangular  sets, such as being radical. If $T$ is a radical triangular set over $\Q$, reduction modulo $p$ doesn't always result in a radical triangular set. We prove that if $T$ is radical over $\Q$, then $T\mod p$ is radical for all but finitely many primes. We give an algorithm for determining if a prime $p$ enjoys this property, which is based on a corollary from Hubert \cite{hubert}.

In section 3, we present how to use Hensel lifting to solve the zero-divisor problem. We prove a variant of Hensel's lemma that's applicable to our ring and give explicit pseudo-code for a Hensel lifting algorithm. The algorithm is chiefly the Hensel construction, but the presence of zero-divisors demands a careful implementation.

In section 4, we give an application of the Hensel lifting to a modular gcd algorithm. Here, we define componentwise-gcds and prove they exist when $T$ is a radical triangular set. We handle bad and unlucky primes, as par for the course with any modular algorithm. Our algorithm is best seen as a generalization of Monagan and van Hoeij's modular gcd algorithm over number fields \cite{hoeij}. We give pseudo-code for the modular gcd algorithm and all necessary sub-procedures. A second application which we are currently 
exploring is the inversion problem, that is, given $u\in \Q[z_1,\dots,z_n]/T$, determine if $u$ is invertible and if so
compute $u^{-1}$.

In section 5, we discuss our implementation of the previously described algorithms in Maple using Monagan and van Hoeij's \textsc{recden} package which uses a recursive dense data structure for polynomials and algebraic extensions. We compare it with the {\tt RegularGcd} procedure in Maple's {\tt RegularChains} package, which uses the subresultant algorithm of Li, Maza, and Pan as described in \cite{gcdregchains}. This comparison includes examples and time tests.

In section 6, we give a complexity analysis of our modular gcd algorithm. This involves a new result about the number of operations it takes to multiply $a,b\in R$ and reduce by $T$. We conclude with expected and worse case running time of our modular gcd algorithm.
We end with a conclusion in section 7.

\section{Triangular Sets}

\subsection{Notation and Definitions}

We begin with some notation. All computations will be done in the ring $k[z_1,\dots,z_n]$ endowed with the monomial ordering $z_i < z_{i+1}$ and $k$ a field.
Let $f\in k[z_1,\dots,z_n]$ be non-constant. The {\it main variable} $\mvar(f)$ of $f$ is the largest variable with nonzero degree in $f$, and the {\it main degree} of $f$ is $\mdeg(f) = \deg_{\mvar(f)}(f)$.

As noted in the introduction, triangular sets will be of key interest in this paper. Further, they are to be viewed as a generalization of an algebraic number field with multiple extensions. For this reason, we impose extra structure than is standard:

\begin{defn}
 A {\it triangular set} $T$ is a set of non-constant polynomials in $k[z_1,\dots,z_n]$ satisfying
 \begin{itemize} \item[]
 (i)  $|T| = n$, \\
 (ii)  $T = \{t_1,\dots,t_n\}$ where $\mvar(t_i) = z_i$, \\
 (iii)  $t_i$ is monic with respect to $z_i$, and \\
 (iv)  $\deg_{z_j}(t_i) < \mdeg(t_j)$ for $j < i$.
 \end{itemize}
The degree of $T$ is $\prod_{i=1}^n \mdeg(t_i)$. Also, $T = \emptyset$ is a triangular set.
\end{defn}
Condition (i) states there are no unused variables. This is equivalent to $T$ being zero-dimensional. Condition (ii) gives a standard notation that will be used throughout this paper. Conditions (iii) and (iv) relates the definition to that of minimal polynomials. Condition (iv) is commonly referred to as a reduced triangular set as seen in \cite{theoryoftrisets}. The degree of $T$ is akin to the degree of an extension.

\begin{ex}
The polynomials $\{z_1^3 + 4z_1, z_2^2 + (z_1+1)z_2 + 4\}$ form a triangular set. However, $\{z_2^2 + (z_1+1)z_2 + 4\}$ wouldn't  since there's no polynomial with $z_1$ as a main variable. Also, 
$\{ t_1 = z_1^3 + 4z_1, ~ t_2 = z_2^2 + z_1^4z_2 + 3 \}$ isn't because $\deg_{z_1}(t_2) = 4 > \mdeg(t_1)$.
\end{ex}

A zero-divisor $u\in k[z_1,\dots,z_n]$ modulo $T$ is a polynomial such that $u\not\in\vect{T}$ and there is a polynomial $v \not\in \vect{T}$ where $uv \in \vect{T}$. Since $R$ is a finite-dimensional $k$-algebra, all nonzero elements are either zero-divisors or units modulo $\vect{T}$.

Given a triangular set $T$, we define $T_i = \{t_1,\dots,t_{i}\}$ and $T_0 = \emptyset$. For example, let $T = \{z_1^3 + 1,\, z_2^3 + 2,\, z_3^3 + 3\}$. Then, $T_3 = T$, $T_2 = \{z_1^3 + 1, z_2^3 + 2\}$, $T_1 = \{z_1^3 + 1\}$. In general, since any triangular set $T$ forms a Grobner basis with respect to the lex monomial ordering,  it follows that $k[z_1,\dots,z_{i}]\cap \vect{T} =\vect{T_{i}}$ when $\vect{T_{i}}$ is viewed as an ideal of $k[z_1,\dots,z_{i}]$;  this is a standard result of elimination theory, see Cox, Little, O'Shea \cite{clo}.

The presence of zero-divisors presents many unforeseen difficulties that the following examples illustrate.

\begin{ex}
It's possible for a monic polynomial to factor as two polynomials with zero-divisors as leading coefficients.
For example, consider the triangular set $T = \{(z_1^2+2)(z_1^2+1), z_2^3-z_2\}$. Observe that when working modulo $(z_1^2+2)(z_1^2+1)$,
$$
z_2^3-z_2 = \left((z_1^2+2)z_2^2 - 1\right)\left((z_1^2+1)z_2^3+z_2\right).
$$
Of course, a nicer factorization may exist, like $z_2^3-z_2 = (z_2^2-1)z$, but it's not clear if this always occurs or how to compute it. This greatly enhances the complexity of handling zero-divisors. The above equation also shows that the degree formula for the product of two polynomials doesn't hold in this setting.
\end{ex}

\begin{ex}
Another difficulty is that denominators in the factors of a polynomial $a(x) \in R[x]$
may not appear in the denominators of $a(x)$.
Weinberger and Rothschild give the following example
in \cite{rothschild}.  Let $t_1(z_1) = z_1^6+3z_1^5+6z_1^4+z_1^3-3z_1^2+12z_1+16$ which
is irreducible over $\Q$.  The polynomial
$$ \textstyle
f = x - \frac{4}{3} - \frac{11}{12} z_1 + \frac{7}{12}z_1^2-\frac{1}{6} z_1^3-\frac{1}{12} z_1^4-\frac{1}{12} z_1^5$$
is a factor of $a(x) = x^3-3$ in $R[x]$.  
The denominator of any factor of $a(x)$ ($\denom(f) = 12$ in this example)
must divide the defect $d$ of the field $R$.  It is known that the discriminant $\Delta$ of $t_1(z_1)$ is
a multiple of $d$, usually, much larger than $d$.
Thus we could try to recover $\Delta f$ with Chinese remaindering
then make this result monic.
Although one could try to generalize the discriminant to the case $n>1$,
using rational number reconstruction circumvents this difficulty and also
allows us to recover $g$ without using a lot more primes than necessary.
\end{ex}

Lastly, since there is no standard definition of $\gcd(a,b)$ for $a,b\in R[x]$ where $R$ is a commutative ring unless $R$ is a unique factorization domain, we'd like to make it explicit that $g = \gcd(a,b)$ if (i) $g\divs a$ and $g\divs b$, and (ii) any common divisor of $a$ and $b$ is a divisor of $g$.

\subsection{Radical Triangular Sets}

An ideal $I\subset k[x_1,\dots,x_n]$ is radical if $f^m\in I$ implies $f\in I$.
To start, we give a structure theorem for radical and zero-dimensional triangular sets. One could prove this more generally by using the associated primes of $T$ as done in Proposition 4.7 of \cite{hubert}. The structure theorem gives many powerful corollaries.

\begin{theorem}
Let $T\subseteq k[z_1,\dots,z_n]$ be a triangular set. Then, $k[z_1,\dots,z_n]/T$ is isomorphic to a direct product of fields if and only if $T$ is zero-dimensional and radical.
\label{thm:gcdsexist}
\end{theorem}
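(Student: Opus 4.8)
The plan is to prove both directions. For the easier direction ($\Leftarrow$), assume $T$ is zero-dimensional and radical. Then $R = k[z_1,\dots,z_n]/\vect{T}$ is a finite-dimensional $k$-algebra, so it is Artinian, and by the structure theorem for Artinian rings it decomposes as a finite product $\prod_j R/\mf{m}_j^{e_j}$ over its (finitely many) maximal ideals, or more directly: the zero ideal has a primary decomposition $\vect{T} = \bigcap_j Q_j$ with $\sqrt{Q_j} = \mf{m}_j$ maximal (zero-dimensionality forces all associated primes to be maximal). Radicality forces each $Q_j = \mf{m}_j$, so $\vect{T} = \bigcap_j \mf{m}_j$ with the $\mf{m}_j$ pairwise comaximal, and the Chinese remainder theorem gives $R \cong \prod_j k[z_1,\dots,z_n]/\mf{m}_j$, a finite product of fields.

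For the harder direction ($\Rightarrow$), suppose $R \cong \prod_{j=1}^m K_j$ with each $K_j$ a field. First, zero-dimensionality: $R$ is a finite product of fields hence Artinian, but I should connect this to the triangular-set condition (i). Actually the cleanest route is to observe that $R$ being a finite product of fields means $R$ has Krull dimension $0$, i.e. $\vect{T}$ is zero-dimensional as an ideal; and a triangular set with $|T| = n$ generators in $n$ variables that forms a Gröbner basis (as noted in the excerpt) has quotient ring of dimension $\prod_i \mdeg(t_i) < \infty$ over $k$, so this is automatic once we know $T$ satisfies the definition — the real content is radicality. For radicality: take $f$ with $f^\ell \in \vect{T}$ for some $\ell$. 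Under the isomorphism $R \cong \prod_j K_j$, the image of $f$ is a tuple $(\bar f_1,\dots,\bar f_m)$ with $\bar f_j^\ell = 0$ in $K_j$ for each $j$; since each $K_j$ is a field (in particular reduced, having no nonzero nilpotents), $\bar f_j = 0$ for all $j$, so the image of $f$ is $0$ in $R$, i.e. $f \in \vect{T}$. Hence $\vect{T}$ is radical. Equivalently, a finite product of fields has zero nilradical, and the nilradical of $R$ is $\sqrt{\vect{T}}/\vect{T}$, so $\sqrt{\vect{T}} = \vect{T}$.

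The main obstacle is bookkeeping the zero-dimensionality hypothesis correctly: one must be careful whether "zero-dimensional" is being used as an extra hypothesis or is already guaranteed by the triangular-set axioms. Given the definition in the excerpt (condition (i), $|T| = n$, together with the Gröbner basis remark giving $\dim_k R = \prod_i \mdeg(t_i) < \infty$), every triangular set in our sense is already zero-dimensional, so in the $(\Leftarrow)$ direction zero-dimensionality is free and only radicality is assumed, while in the $(\Rightarrow)$ direction we genuinely extract radicality from the product-of-fields structure via the reducedness of fields. I would state this normalization explicitly at the start of the proof. The remaining steps — primary decomposition in a zero-dimensional ring, comaximality of distinct maximal ideals, and the Chinese remainder theorem — are all standard and can be cited from a commutative algebra reference or, as the excerpt suggests, read off from Proposition 4.7 of \cite{hubert}.
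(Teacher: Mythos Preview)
Your proposal is correct and follows the approach the paper itself points to. In fact, the paper does not give a proof of this theorem at all: it simply remarks beforehand that ``one could prove this more generally by using the associated primes of $T$ as done in Proposition 4.7 of \cite{hubert}'' and then states the result without a proof environment. Your argument via primary decomposition (associated primes are maximal by zero-dimensionality, radicality forces $Q_j = \mf{m}_j$, then CRT) is exactly the associated-primes route the paper alludes to, just written out in full. Your observation that the paper's own definition of triangular set (condition~(i), $|T|=n$) already forces zero-dimensionality is also accurate and worth stating explicitly, since it makes the ``zero-dimensional'' hypothesis in the theorem formally redundant under the paper's conventions.
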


\begin{corollary}
Let $T \subset k[z_1,\dots,z_n]$ be a radical, zero-dimensional triangular set and $R = k[z_1,\dots,z_n]/T$. Let $a,b\in R[x]$. Then a greatest common divisor of $a$ and $b$ exists.
\end{corollary}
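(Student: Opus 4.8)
The plan is to reduce to the classical case of polynomials over a field and transport everything through the isomorphism of Theorem~\ref{thm:gcdsexist}. Since $T$ is radical and zero-dimensional, that theorem gives a ring isomorphism $\varphi\colon R \to F_1\times\cdots\times F_m$ with each $F_j$ a field, and extending $\varphi$ coefficientwise identifies $R[x]$ with $F_1[x]\times\cdots\times F_m[x]$. Under this identification write $a\leftrightarrow(a_1,\dots,a_m)$ and $b\leftrightarrow(b_1,\dots,b_m)$. First I would pass to each factor $F_j[x]$, which is a Euclidean domain, pick a gcd $g_j$ of $a_j$ and $b_j$ there (taking $g_j=0$ if $a_j=b_j=0$), and let $g\in R[x]$ be the preimage of $(g_1,\dots,g_m)$; the claim will be that this $g$ meets conditions (i) and (ii) of the definition of $\gcd$.

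The technical ingredient that makes the reduction work is the (easy) fact that divisibility in a finite product of commutative rings is componentwise: $f=(f_1,\dots,f_m)$ divides $h=(h_1,\dots,h_m)$ in $F_1[x]\times\cdots\times F_m[x]$ if and only if $f_j\divs h_j$ in $F_j[x]$ for all $j$. One direction is projection onto the $j$-th factor; the other is assembling the quotients into a tuple. I would establish this first, since it converts both defining properties of $\gcd$ into assertions about the individual $F_j[x]$.

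Granting that, condition (i) is immediate because $g_j\divs a_j$ and $g_j\divs b_j$ in $F_j[x]$ for each $j$, hence $g\divs a$ and $g\divs b$ in $R[x]$. For condition (ii), given a common divisor $c\leftrightarrow(c_1,\dots,c_m)$ of $a$ and $b$, componentwise divisibility gives $c_j\divs a_j$ and $c_j\divs b_j$, so $c_j\divs g_j$ because in the PID $F_j[x]$ every common divisor of two elements divides their gcd — and this stays true in the degenerate case $a_j=b_j=0$, where $g_j=0$ and $c_j\divs 0$ trivially. Applying componentwise divisibility once more yields $c\divs g$, so $g$ is a greatest common divisor of $a$ and $b$.

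I do not expect a genuine obstacle: all the weight is carried by Theorem~\ref{thm:gcdsexist}, which is already available. The only points needing care are the bookkeeping of the zero components (some $a_j$ or $b_j$ may vanish even when $a$ and $b$ do not) and remembering that the paper's definition of $\gcd$ demands only existence, not uniqueness, so $g$ need not be normalized in any way.
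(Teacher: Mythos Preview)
Your proof is correct and follows the same approach as the paper: the paper's proof is the one-line remark that the result follows straightforwardly from the CRT and Theorem~\ref{thm:gcdsexist}, and what you have written is precisely the unpacking of that remark. Your attention to componentwise divisibility and to the degenerate zero components is more careful than the paper, but there is no difference in strategy.
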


\begin{proof}
This follows straightforwardly using the CRT and Theorem \ref{thm:gcdsexist}.
\end{proof}

\begin{corollary}[Extended Euclidean Representation]
Let $T \subset k[z_1,\dots,z_n]$ be a radical, zero-dimensional triangular set and $R = k[z_1,\dots,z_n]/T$. Let $a,b\in R[x]$ with $g = \gcd(a,b)$. Then, there exists polynomial $A,B \in R[x]$ such that $aA + bB = g$.
\label{exteucrep}
\end{corollary}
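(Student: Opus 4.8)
The plan is to leverage Theorem~\ref{thm:gcdsexist}, which gives a ring isomorphism $\phi\colon R \to F_1\times\cdots\times F_m$ where each $F_j$ is a field, and to work componentwise. Extending $\phi$ coefficientwise gives an isomorphism $R[x]\to F_1[x]\times\cdots\times F_m[x]$, which I will also denote $\phi$; write $\phi(a) = (a_1,\dots,a_m)$ and $\phi(b) = (b_1,\dots,b_m)$ with $a_j,b_j\in F_j[x]$. Since each $F_j$ is a field, $F_j[x]$ is a Euclidean domain, so for each $j$ there are $A_j,B_j\in F_j[x]$ with $a_jA_j + b_jB_j = g_j$, where $g_j$ is a gcd of $a_j$ and $b_j$ in $F_j[x]$ (taken to be $0$ when $a_j=b_j=0$). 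Set $A = \phi^{-1}(A_1,\dots,A_m)$, $B = \phi^{-1}(B_1,\dots,B_m)$, and $h = \phi^{-1}(g_1,\dots,g_m)$; then $aA+bB = h$ in $R[x]$ by construction.

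The remaining step is to check that $h$ is actually a gcd of $a$ and $b$ in $R[x]$, i.e.\ that it agrees (up to the appropriate notion of associates) with the $g$ in the statement, so that the identity $aA+bB=h$ is the desired extended representation. For this I would verify the two defining properties of a gcd from the end of Section~2.1 for $h$ directly. First, $h \divs a$ and $h \divs b$: componentwise, $g_j \divs a_j$ and $g_j\divs b_j$ in $F_j[x]$, and since $\phi$ is a ring isomorphism, divisibility is preserved in both directions, so $h\divs a$ and $h\divs b$ in $R[x]$. Second, if $c\in R[x]$ is any common divisor of $a$ and $b$, then $\phi(c)=(c_1,\dots,c_m)$ with $c_j\divs a_j$ and $c_j\divs b_j$, hence $c_j \divs g_j$ in $F_j[x]$ for every $j$, hence $c\divs h$ in $R[x]$. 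Thus $h$ satisfies properties (i) and (ii), so $h$ is a gcd of $a$ and $b$; since the Corollary preceding this one guarantees a gcd exists and any two gcds divide each other, $h$ and $g$ are associates, and absorbing the unit into $A$ and $B$ gives $aA+bB=g$.

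The only genuinely delicate point is bookkeeping about the zero polynomial and about units: one must be careful that ``gcd'' in $F_j[x]$ is only defined up to a nonzero scalar, and that the product-of-fields decomposition means a ``unit'' of $R[x]$ can be a tuple of distinct nonzero scalars in the various $F_j$, not a global constant. None of this obstructs the argument — it just means the final passage from $h$ to the specific $g$ in the statement goes through ``associate'' rather than ``equal.'' I expect this to be the main (and only real) obstacle, and it is minor; the construction itself is immediate once Theorem~\ref{thm:gcdsexist} is in hand. I would also remark that this simultaneously reproves the existence Corollary, since $h$ is exhibited explicitly.
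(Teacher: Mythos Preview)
Your proposal is correct and follows essentially the same route as the paper: decompose $R$ as a product of fields via Theorem~\ref{thm:gcdsexist}, run the extended Euclidean algorithm componentwise to get $aA+bB=h$, and then pass from $h$ to the given $g$. The paper's final step is marginally more direct---it only uses that $h$ is a common divisor of $a$ and $b$, hence $h\mid g$, and multiplies through by the quotient---whereas you verify both gcd properties for $h$ and invoke associates; both are fine, and your care about the zero components and units is well placed.
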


\begin{proof}
Note that $R \cong \prod F_i$ where $F_i$ is a field, and we can extend this to $R[x] \cong \prod F_i[x]$. Let $a \mapsto (a_i)_i$ and $b \mapsto (b_i)_i$. Define $h_i = \gcd(a_i,b_i)$ in $F_i[x]$. By the extended Euclidean algorithm, there exists $A_i,B_i\in F_i[x]$ such that $a_iA_i + b_iB_i = h_i$. Let $h \mapsto (h_i)_i$ and $A\mapsto (A_i)_i$ and $B\mapsto (B_i)_i$. Clearly, $aA + bB = h$ in $R[x]$. Since $h \divs g$, we can multiply through by the quotient to write $g$ as a linear combination of $a$ and $b$.
\end{proof}

It should be noted that Corollary \ref{exteucrep} works even if running the Euclidean algorithm on $a$ and $b$ encounters a zero-divisor. This shows it's more powerful than the extended Euclidean algorithm. Further, it also applies to the case where $\lc(g)$ is a zero-divisor.

We next turn our attention to working modulo primes.

\begin{defn}
Let $T\subset \Q[z_1,\dots,z_n]$ be a radical triangular set. A prime number $p$ is a radical prime if $p$ doesn't appear as a denominator of any of the polynomials in $T$, and if $T \mod p \subset \Z_p[z_1,\dots,z_n]$ remains radical.
\end{defn}

\begin{ex}
The triangular set $\{z_1^2 - 3\}$ is radical over $\Q$. Since the discriminant of $z_1^2-3$ is 12, it 
follows that $2,3$ aren't radical primes, but all other primes are.
\end{ex}

If there were an infinite family of nonradical primes, it would present a problem for the algorithm. We prove this can't happen. This has also been proven with quantitative bounds in \cite{lifting}. The following lemma is a restatement of Corollary 7.3 of \cite{hubert}. It also serves as the main idea of our algorithm for testing if a prime is radical; see IsRadicalPrime below.

\begin{lemma}
Let $T\subset k[z_1,\dots,z_n]$ be a zero-dimensional triangular set. Then $T$ is radical if and only if $\gcd(t_i,t_i') = 1 \pmod{T_{i-1}}$ for all $i$.
\label{lemma:radicaltest}
\end{lemma}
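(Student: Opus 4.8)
The plan is to work componentwise using the structure theorem. By Theorem~\ref{thm:gcdsexist}, a zero-dimensional triangular set $T$ is radical if and only if $R = k[z_1,\dots,z_n]/T$ is isomorphic to a finite direct product of fields. So the strategy is to relate the condition $\gcd(t_i, t_i') = 1 \pmod{T_{i-1}}$ to the ring $R_{i-1} = k[z_1,\dots,z_{i-1}]/T_{i-1}$ being a product of fields and $R_i = R_{i-1}[z_i]/\langle t_i \rangle$ inheriting that property. I would proceed by induction on $n$, with $T_0 = \emptyset$ and $R_0 = k$ as the trivial base case.

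For the inductive step, assume $R_{i-1}$ is a product of fields, say $R_{i-1} \cong \prod_j F_j$. Since $t_i$ is monic in $z_i$ over $R_{i-1}$, we get $R_i \cong \prod_j F_j[z_i]/\langle \bar t_i^{(j)} \rangle$ where $\bar t_i^{(j)}$ is the image of $t_i$ in $F_j[z_i]$ (still monic, hence nonconstant by condition (iii) of the triangular set definition, since $\mdeg(t_i) \geq 1$). Now over each field $F_j$, the quotient $F_j[z_i]/\langle \bar t_i^{(j)} \rangle$ is a product of fields if and only if $\bar t_i^{(j)}$ is squarefree, which holds if and only if $\gcd(\bar t_i^{(j)}, (\bar t_i^{(j)})') = 1$ in $F_j[z_i]$. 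The key point is then that $\gcd(t_i, t_i') = 1 \pmod{T_{i-1}}$ holds in the componentwise sense across all the $F_j$ simultaneously exactly when each $\bar t_i^{(j)}$ is squarefree; I would need to make precise that "$\gcd = 1 \pmod{T_{i-1}}$" is to be interpreted as: the ideal generated by $t_i, t_i'$ and $T_{i-1}$ is all of $k[z_1,\dots,z_i]$, equivalently the resultant-type gcd computed via the Euclidean-like algorithm over $R_{i-1}$ returns a unit, which by the CRT decomposition is equivalent to a gcd of $1$ in every factor $F_j[z_i]$.

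The remaining bookkeeping is to chain the equivalences: $T$ radical $\iff$ $R_n$ a product of fields $\iff$ ($R_{n-1}$ a product of fields and $\bar t_n^{(j)}$ squarefree for all $j$) $\iff$ ($T_{n-1}$ radical and $\gcd(t_n, t_n') = 1 \pmod{T_{n-1}}$) $\iff$ (by induction) $\gcd(t_i, t_i') = 1 \pmod{T_{i-1}}$ for all $i \le n$. One subtlety worth flagging: when $R_{i-1}$ has zero-divisors, the "gcd modulo $T_{i-1}$" is not computed by a naive Euclidean algorithm — it may itself hit a zero-divisor — so I would phrase the condition as the existence of an extended-Euclidean-type certificate (as in Corollary~\ref{exteucrep}) that $\langle T_{i-1}, t_i, t_i'\rangle = \langle 1 \rangle$, and note that this is well-defined and matches the componentwise squarefreeness.

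The main obstacle I anticipate is the careful handling of what $\gcd(t_i, t_i') = 1 \pmod{T_{i-1}}$ even means when $T_{i-1}$ is not prime: one must fix a convention (ideal-theoretic: the triple generates the unit ideal; or componentwise: gcd is $1$ in each field factor) and verify these conventions agree. Everything else — monic reduction commuting with the product decomposition, squarefreeness $\iff$ coprimality with the derivative over a field, and the induction — is routine. Since the paper cites this as a restatement of Corollary~7.3 of Hubert~\cite{hubert}, I would also remark that the argument there via associated primes gives an alternative route, but the inductive product-of-fields argument above is self-contained given Theorem~\ref{thm:gcdsexist}.
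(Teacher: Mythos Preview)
The paper does not actually supply a proof of this lemma; it simply states it as a restatement of Corollary~7.3 of Hubert~\cite{hubert} and moves on. So there is no ``paper's own proof'' to compare against --- you are filling in what the paper leaves to a citation.

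Your inductive argument via Theorem~\ref{thm:gcdsexist} is sound and self-contained. One small point worth making explicit: in the chain of equivalences you write $R_n$ a product of fields $\iff$ ($R_{n-1}$ a product of fields and each $\bar t_n^{(j)}$ squarefree). The $\Leftarrow$ direction is immediate, but for $\Rightarrow$ you need to say why $R_n$ reduced forces $R_{n-1}$ reduced. This holds because $t_n$ is monic in $z_n$, so $R_n$ is free as an $R_{n-1}$-module and hence $R_{n-1}\hookrightarrow R_n$; a subring of a reduced ring is reduced, and a reduced finite-dimensional $k$-algebra is a product of fields. With that one sentence added, the equivalence chain closes cleanly in both directions. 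Your handling of the meaning of ``$\gcd(t_i,t_i')=1\pmod{T_{i-1}}$'' --- interpreting it as $\langle T_{i-1},t_i,t_i'\rangle=\langle 1\rangle$, equivalently componentwise coprimality once $R_{i-1}$ is known to be a product of fields --- is exactly the right clarification, and Corollary~\ref{exteucrep} supplies the Bezout certificate in that setting.
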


\begin{theorem}
Let $T\subset \Q[z_1,\dots,z_n]$ be a radical, zero-dimensional triangular set. All but finitely many primes are radical primes.
\end{theorem}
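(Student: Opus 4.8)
The plan is to reduce the statement to a finite union of "bad prime" conditions, one coming from each of the finitely many GCD conditions in Lemma~\ref{lemma:radicaltest}, and to show each such condition excludes only finitely many primes. By Lemma~\ref{lemma:radicaltest}, $T$ being radical over $\Q$ is equivalent to $\gcd(t_i, t_i') = 1 \pmod{T_{i-1}}$ in $\Q[z_1,\dots,z_{i-1}][z_i]/\vect{T_{i-1}}$ for each $i = 1,\dots,n$; and the same lemma, applied over $\Z_p$, says $T \bmod p$ is radical iff the reduced analogue of each of these $n$ GCD conditions holds. So it suffices to show: for each fixed $i$, the set of primes $p$ for which $p$ divides no denominator in $T$ but $\gcd(t_i \bmod p, t_i' \bmod p) \neq 1 \pmod{T_{i-1} \bmod p}$ is finite. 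Discard at the outset the finitely many primes dividing a denominator of some coefficient in $T$, and also (for bookkeeping) any prime dividing a denominator that appears in the extended-Euclidean witnesses produced below.

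For the $i$-th condition, I would use the Extended Euclidean Representation (Corollary~\ref{exteucrep}) together with Theorem~\ref{thm:gcdsexist}. Since $T_{i-1}$ is radical and zero-dimensional over $\Q$, the ring $R_{i-1} := \Q[z_1,\dots,z_{i-1}]/\vect{T_{i-1}}$ is a product of fields, so $\gcd(t_i, t_i') = 1$ over $R_{i-1}$ means there exist $A, B \in R_{i-1}[z_i]$ with $A\, t_i + B\, t_i' = 1$; lift $A, B$ to polynomials in $\Q[z_1,\dots,z_i]$ so that the identity holds modulo $\vect{T_{i-1}}$, i.e. $A\, t_i + B\, t_i' = 1 + (\text{element of } \vect{T_{i-1}})$ in $\Q[z_1,\dots,z_i]$. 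Let $m$ be the common denominator of all the rational coefficients appearing in $A$, $B$, and the cofactors expressing that last ideal element in terms of $t_1,\dots,t_{i-1}$. Then for every prime $p \nmid m$, reducing this identity mod $p$ gives $\bar A\, \bar t_i + \bar B\, \bar t_i' \equiv 1 \pmod{T_{i-1} \bmod p}$, which forces $\gcd(\bar t_i, \bar t_i') = 1 \pmod{T_{i-1} \bmod p}$ — hence the $i$-th condition over $\Z_p$ holds. (One must also note $T_{i-1} \bmod p$ is still zero-dimensional for $p$ not dividing a denominator, since the $t_j$ stay monic in $z_j$ and keep the same main degrees, so Lemma~\ref{lemma:radicaltest} applies over $\Z_p$.)

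Taking the union over $i = 1,\dots,n$ of the finitely many excluded primes (the denominator primes, plus the primes dividing each $m$ above) gives a finite set outside of which all $n$ conditions hold over $\Z_p$, so $T \bmod p$ is radical and $p$ is a radical prime. The main obstacle is purely technical: making sure that the ideal-membership certificates (the cofactors expressing $A t_i + B t_i' - 1 \in \vect{T_{i-1}}$ in terms of the generators) are tracked so that their denominators get absorbed into $m$; this requires knowing $T_{i-1}$ is a Gr\"obner basis under lex so that reduction/division by it has coefficients in $\Q$ with controlled denominators — which is exactly the elimination-theory fact already cited in Section~2. There is also a mild subtlety that "$T \bmod p$ is a triangular set" must be checked (monicity and main-variable structure are preserved for good $p$, so this is immediate), and that zero-dimensionality is inherited, which we noted above.
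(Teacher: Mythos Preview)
Your proposal is correct and follows essentially the same approach as the paper: both arguments apply Lemma~\ref{lemma:radicaltest} to reduce to the conditions $\gcd(t_i,t_i')=1 \pmod{T_{i-1}}$, invoke Corollary~\ref{exteucrep} to obtain a B\'ezout identity $A_i t_i + B_i t_i' = 1 \pmod{T_{i-1}}$ over $\Q$, and then exclude the finitely many primes dividing the denominators so that the identity survives reduction mod $p$. Your version is slightly more explicit about tracking the cofactors witnessing ideal membership and about why $T \bmod p$ remains a zero-dimensional triangular set, but these are elaborations of the same proof rather than a different route.
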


\begin{proof}
By Lemma \ref{lemma:radicaltest}, $\gcd(t_i,t_i') = 1$. By the extended Euclidean representation (Corollary \ref{exteucrep}), there exist polynomials $A_i,B_i \in (\Q[z_1,\dots,z_{i-1}]/T_{i-1})[z_i]$ where $A_it_i + B_it_i' = 1 \pmod{T_{i-1}}$. Take any prime $p$ that doesn't divide the denominator of any $A_i,B_i,t_i,t_i'$. This means one can reduce this equation modulo $p$ and so $A_it_i + B_it_i' \pmod{T_{i-1}, p}$. This implies $\gcd(t_i,t_i') = 1\pmod{T_{i-1},p}$ and so $T$ remains radical modulo $p$ by Lemma \ref{lemma:radicaltest}. There are only a finite amount of primes that divide the denominator of any of these polynomials.
\end{proof}

Lastly, we give an algorithm for testing if a prime $p$ is radical. It may not always output {\tt True} or {\tt False} as it relies on Lemma \ref{lemma:radicaltest} which relies on a gcd computation modulo $p$, which is computed by the Euclidean algorithm. If a zero-divisor is encountered, we output the zero-divisor. This case is caught later in the modular gcd algorithm, of which IsRadicalPrime is a subroutine.

\begin{algorithm}
\caption{IsRadicalPrime\label{IR}}
\SetKwInOut{Input}{Input}\SetKwInOut{Output}{Output}\SetArgSty{text}
\Input{A zero-dimensional, radical triangular set $T\subset \Q[z_1,\dots,z_n]$ and a prime number $p$ that does not divide any denominator of any coefficient of any $t_i\in T$.}
\Output{A boolean indicating if $T$ remains radical modulo $p$, or a zero-divisor.}
\For{$i = 1,\dots,n$}{
    $dt := \frac{\partial}{\partial z_i}T[i]$\;
    $g := \gcd(T[i], dt)$ over $\Z_p[z_1,\dots,z_i]/T_{i-1}$\;
    {\bf if} $g = $ [``ZERODIVISOR'', $u$] {\bf then} \Return [``ZERODIVISOR'', $u$]\;
    {\bf if} $g \neq 1$ {\bf then} \Return{False}\;
}
\Return{True}\;
\end{algorithm}

\section{Handling Zero-Divisors}

We turn our attention to lifting a factorization $f = ab \pmod{T,p}$ for $a,b,f\in R[x]$. A general factorization will not be liftable; certain conditions are necessary for existence and uniqueness of each lifting step. For one, we will need $\gcd(a,b) = 1 \pmod{p}$ as is required in the case with no extensions to satisfy existence. Further, we will need both $a$ and $b$ to be monic to satisfy uniqueness. The following lemma gives a uniqueness criterion for the extended Euclidean representation. It generalizes Theorem 26 in Geddes, Czapor, Labahn \cite{gcl} from $F[x]$ to $R[x]$. We give a proof, but note that it is only a slight alteration.

\begin{lemma}
\label{lemma:diophantsolve}
Let $T \subset k[z_1,\dots,z_n]$ be a zero-dimensional triangular set and $R = k[z_1,\dots,z_n]/T$. Let $a,b\in R[x]$ be nonzero and monic with polynomials $A,B$ where $1 = Aa+Bb$. Then, for any polynomial $c\in R[x]$, there exist unique polynomials $\sigma,\tau\in R[x]$ such that 
$$
a\sigma + b\tau = c, \quad \deg(\sigma) < \deg(b).
$$
\end{lemma}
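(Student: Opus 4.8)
The strategy is to mimic the classical proof over a field (Geddes–Czapor–Labahn Theorem 26), being careful that $R$ is only a product of fields rather than a field itself — which is exactly the setting where monic-ness, not invertibility of leading coefficients, rescues us. For existence: starting from $1 = Aa + Bb$, multiply by $c$ to get $c = (cA)a + (cB)b$. This is a solution but $cA$ may have degree $\geq \deg(b)$. Since $b$ is monic, polynomial division of $cA$ by $b$ in $R[x]$ is well-defined: write $cA = qb + \sigma$ with $\deg(\sigma) < \deg(b)$. Substituting, $c = \sigma a + (cB + qa)b$, so setting $\tau = cB + qa$ gives the required pair. The key point is that division by a monic polynomial works over any commutative ring, so we never need to invert a zero-divisor.

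For uniqueness: suppose $a\sigma_1 + b\tau_1 = a\sigma_2 + b\tau_2$ with both $\sigma_i$ of degree less than $\deg(b)$. Then $a(\sigma_1 - \sigma_2) = b(\tau_2 - \tau_1)$. Multiply both sides by $A$ and use $Aa = 1 - Bb$: we get $(\sigma_1 - \sigma_2) - Bb(\sigma_1 - \sigma_2) = Ab(\tau_2 - \tau_1)$, so $b$ divides $\sigma_1 - \sigma_2$ in $R[x]$. But $b$ is monic of degree $\deg(b)$, and $\deg(\sigma_1 - \sigma_2) < \deg(b)$; the only multiple of a monic polynomial of that degree with strictly smaller degree is zero — again this degree argument is valid over any commutative ring precisely because $b$ is monic, so its leading coefficient cannot be killed. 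Hence $\sigma_1 = \sigma_2$, and then $b(\tau_2 - \tau_1) = 0$; since $b$ is monic (hence not a zero-divisor in $R[x]$ — its leading coefficient is $1$), we conclude $\tau_1 = \tau_2$.

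I do not anticipate a genuine obstacle here; the proof is essentially the field case with "invert the leading coefficient" replaced by "divide by a monic polynomial." The one subtlety worth stating carefully is why a monic polynomial in $R[x]$ is not a zero-divisor and why the degree/division arguments go through — this rests only on the leading coefficient being the unit $1$, and does not require $R$ to be a domain or even reduced. I would make the remark that $\deg$ is not well-behaved for arbitrary products in $R[x]$ (as Example 3 in the excerpt shows), but that multiplying or dividing by a \emph{monic} polynomial does behave as expected, and that is all we use.
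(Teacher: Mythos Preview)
Your proposal is correct and follows essentially the same approach as the paper's proof: multiply the B\'ezout relation by $c$, divide $cA$ by the monic polynomial $b$ to reduce the degree, and for uniqueness use that $b$ divides $\sigma_1-\sigma_2$ together with the degree bound and the fact that a monic polynomial is not a zero-divisor. Your uniqueness argument is in fact slightly more explicit than the paper's --- where the paper simply writes ``since $\gcd(a,b)=1$, it follows that $b\mid\sigma_1-\sigma_2$,'' you spell this out by multiplying through by $A$ and using $Aa=1-Bb$, which is exactly what is needed to justify that step over a ring that is not a domain.
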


\begin{proof}
{\it Existence}: Multiplying through $1=Aa+Bb$ by $c$ gives $a(cA) + b(cB) = c$. Dividing $cA$ by $b$, which we can do since $b$ is monic, gives $cA = qb + r$ with $r=0$ or $\deg(r) < \deg(b)$. Define $\sigma = r$ and $\tau = cB+qa$. Observe that
$$
a\sigma + b\tau = ar + b(cB + qa) = ar + bcB + abq = a(r+bq) + bcB = acA + bcB = c(aA+bB) = c
$$
thus $\sigma$ and $\tau$ satisfy the conditions of the Lemma.
%
{\it Uniqueness}: Suppose both pairs $\sigma_1,\tau_1$ and $\sigma_2,\tau_2$ satisfy $a\sigma_i + b \tau_i = c$ with the desired degree constraint. This yields
$$
(\sigma_1 - \sigma_2)a = b(\tau_2 - \tau_1).
$$
Since $\gcd(a,b) = 1$, it follows that $b \divs \sigma_1 - \sigma_2$. However, since $b$ is monic and $\deg(\sigma_1 - \sigma_2) < \deg(b)$, this is only possible if $\sigma_1 - \sigma_2 = 0$. Thus $0 = b (\tau_2 - \tau_1)$.  Next, since $b$ is not a zero-divisor (because it's monic), this can only happen if $\tau_2 - \tau_1 = 0$ as well.
\end{proof}

We're particularly interested in trying to factor $t_n$ modulo $T_{n-1}$ because encountering a zero-divisor may lead to such a factorization; that is, if $w$ is a zero-divisor with main variable $z_n$, we can write $u = \gcd(t_n, w)$ and then $t_n = uv \mod\vect{T_{n-1}}$ by the division algorithm. 
As long as $T$ is radical, the next lemma shows we automatically get $\gcd(u,v) = 1$.

\begin{lemma}
Let $T\subset k[z_1,\dots,z_n]$ be a radical, zero-dimensional triangular set. Suppose $t_n \equiv uv \pmod{T_{n-1}}$. Then, $\gcd(u,v) = 1 \pmod{T_{n-1}}$.
\end{lemma}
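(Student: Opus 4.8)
The plan is to work component-wise using the structure theorem (Theorem~\ref{thm:gcdsexist}), which tells us $R_{n-1} := k[z_1,\dots,z_{n-1}]/T_{n-1}$ is a direct product of fields $\prod_j F_j$ because $T_{n-1}$ is itself radical and zero-dimensional (radicality of $T$ forces radicality of each $T_i$, e.g.\ via Lemma~\ref{lemma:radicaltest}). Under the induced isomorphism $R_{n-1}[z_n]\cong \prod_j F_j[z_n]$, write $t_n\mapsto (t_n^{(j)})_j$, $u\mapsto(u^{(j)})_j$, $v\mapsto(v^{(j)})_j$, so that $t_n^{(j)} = u^{(j)} v^{(j)}$ in each $F_j[z_n]$. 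Since $\gcd(u,v)=1\pmod{T_{n-1}}$ holds iff $\gcd(u^{(j)},v^{(j)})=1$ in $F_j[z_n]$ for every $j$, it suffices to prove the claim in each factor, i.e.\ to reduce to the case where the coefficient ring is a field $F$.

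So the crux is: if $F$ is a field and $t_n^{(j)} = u^{(j)} v^{(j)}$ in $F[z_n]$ where $t_n^{(j)}$ is squarefree, then $\gcd(u^{(j)}, v^{(j)}) = 1$. First I would establish that $t_n^{(j)}$ is squarefree in $F[z_n]$: this is exactly the content of Lemma~\ref{lemma:radicaltest} applied to $T$, which gives $\gcd(t_n, t_n') = 1 \pmod{T_{n-1}}$, and this coprimality also passes through the CRT isomorphism component-wise, so $\gcd(t_n^{(j)}, (t_n^{(j)})') = 1$ in $F[z_n]$, i.e.\ $t_n^{(j)}$ has no repeated irreducible factors. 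Then the factorization conclusion is immediate from unique factorization in the PID $F[z_n]$: any common irreducible factor of $u^{(j)}$ and $v^{(j)}$ would divide $t_n^{(j)}$ to multiplicity at least two, contradicting squarefreeness. Hence $\gcd(u^{(j)},v^{(j)})=1$ for each $j$, and reassembling the components gives $\gcd(u,v)=1\pmod{T_{n-1}}$.

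The main obstacle is bookkeeping rather than mathematics: one must be careful that the $\gcd$'s referred to in the statement are the component-wise gcds (a unit/"$1$" in each factor field of $R_{n-1}$), matching the paper's own convention for gcd over a product of fields as used in Corollary~\ref{exteucrep}. In particular $u$ or $v$ may individually have zero-divisor leading coefficients and thus vanish or drop degree in some components, but this causes no trouble: in a component where, say, $u^{(j)}$ is a unit of $F$ (degree $0$), $\gcd(u^{(j)},v^{(j)})=1$ trivially; and the case $u^{(j)}=0$ cannot occur since its product with $v^{(j)}$ equals the nonzero monic $t_n^{(j)}$. One should also note that $t_n$ being monic in $z_n$ guarantees $t_n^{(j)}\neq 0$ in every component, so "squarefree" is meaningful everywhere. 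With these remarks the argument is complete; I would present it as: reduce via CRT to a field, invoke Lemma~\ref{lemma:radicaltest} for squarefreeness, then apply unique factorization.
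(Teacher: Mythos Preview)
Your proof is correct but takes a genuinely different route from the paper's. You decompose $R_{n-1}$ via the structure theorem into a product of fields, then use Lemma~\ref{lemma:radicaltest} to get squarefreeness of each $t_n^{(j)}$ in $F_j[z_n]$, and finish with unique factorization. The paper instead argues directly over $R_{n-1}$ without any CRT splitting: letting $g=\gcd(u,v)$ with $u=\overline{u}g$, $v=\overline{v}g$, one has $t_n\equiv \overline{u}\,\overline{v}\,g^2\pmod{T_{n-1}}$, hence $(\overline{u}\,\overline{v}\,g)^2\equiv 0\pmod{T}$; radicality of $T$ forces $\overline{u}\,\overline{v}\,g\equiv 0\pmod{T}$, i.e.\ $\overline{u}\,\overline{v}\,g=qt_n\pmod{T_{n-1}}$ for some $q$, and multiplying by $g$ gives $(gq-1)t_n\equiv 0\pmod{T_{n-1}}$; since $t_n$ is monic it is a non-zero-divisor, so $gq=1$ and $g$ is a unit.

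Your approach is more structural and perhaps more transparent, but it leans on both Theorem~\ref{thm:gcdsexist} and Lemma~\ref{lemma:radicaltest}. The paper's argument is shorter and uses only the bare definition of a radical ideal (no nilpotents) together with the observation that a monic polynomial is a non-zero-divisor; in particular it never needs to know that $R_{n-1}$ is a product of fields or that $t_n$ is squarefree in any component. Either proof is fine to present.
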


\begin{proof}
Let $u = \overline{u}g \pmod{T_{n-1}}$ and $v = \overline{v}g\pmod{T_{n-1}}$. Note that $t_n \equiv \overline{u}\overline{v}g^2 \pmod{T_{n-1}}$. This would imply $(\overline{u}\overline{v}g)^2 \equiv 0 \pmod{T}$; that is, $\overline{u}\overline{v}g$ is a nilpotent element. However, since nilpotent elements don't exist modulo a radical ideal, $\overline{u}\overline{v}g \equiv 0 \pmod{T}$. This would imply $\overline{u}\overline{v}g \equiv q t_n \pmod{T_{n-1}}$ for some polynomial $q$. Then,
$$
(gq-1)t_n \equiv gqt_n-t_n \equiv g\overline{u}\overline{v}g - t_n \equiv 0 \pmod{T_{n-1}}.
$$
Since $t_n$ is monic in $z_n$, it can't be a zero-divisor modulo $T_{n-1}$. Therefore, $gq - 1 \equiv 0 \pmod{T_{n-1}}$. Thus, $g$ is a unit modulo $T_{n-1}$ and so indeed $\gcd(u,v)=1\pmod{T_{n-1}}$.
\end{proof}

Finally, the next proposition shows that lifting is possible. The proof given is simply the Hensel construction.

\begin{prop}
Let $T \subset \Z_p[z_1,\dots,z_n]$ be a zero-dimensional triangular set with $p$ a prime number. Suppose $t_n \equiv u_0v_0 \pmod{T_{n-1},p}$ where $u_0$ and $v_0$ are monic. Assume there are polynomials $A,B$ where $1 = Au_0+Bv_0$. Then, there exist unique monic polynomials $u_k,v_k$ such that $t_n\equiv u_kv_k \pmod{T_{n-1},p^k}$ and 
$u_k \equiv u_0 \mod \pmod{T_{n-1},p}$ and $v_k \equiv v_0 \mod \pmod{T_{n-1},p}$ for all $k \ge 1$.
\end{prop}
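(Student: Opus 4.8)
The plan is to prove this by induction on $k$, following the standard Hensel construction but keeping careful track of the fact that we are working over the quotient ring $R_{n-1} = \Z_p[z_1,\dots,z_{n-1}]/T_{n-1}$ rather than a field. The base case $k=1$ is the hypothesis. For the inductive step, suppose we have monic $u_k, v_k$ with $t_n \equiv u_k v_k \pmod{T_{n-1},p^k}$ and the correct reductions mod $p$. I would set $u_{k+1} = u_k + p^k \sigma$ and $v_{k+1} = v_k + p^k \tau$ for unknown polynomials $\sigma,\tau$ over $R_{n-1}$ (reduced mod $p$), and demand $t_n \equiv u_{k+1} v_{k+1} \pmod{T_{n-1}, p^{k+1}}$. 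Expanding and using that $p^{2k} \equiv 0 \pmod{p^{k+1}}$ since $k \ge 1$, this reduces to the congruence $u_k \tau + v_k \sigma \equiv e_k \pmod{T_{n-1},p}$ where $e_k = (t_n - u_k v_k)/p^k$ is a well-defined polynomial over $\Z_p[z_1,\dots,z_{n-1}]$ by the induction hypothesis.

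The key point is then to solve this Diophantine-type equation for $\sigma,\tau$ with $\deg(\sigma) < \deg(v_k) = \mdeg(t_n)$ (so that $u_{k+1}$ stays monic of the right degree, and a degree count forces $v_{k+1}$ monic as well). This is exactly where Lemma \ref{lemma:diophantsolve} applies: $u_k$ and $v_k$ are monic in $R_{n-1}[z_n]$, and we have a Bézout relation $1 = A u_0 + B v_0 \pmod{T_{n-1},p}$; since $u_k \equiv u_0$ and $v_k \equiv v_0 \pmod{T_{n-1},p}$, the same $A,B$ give $1 \equiv A u_k + B v_k \pmod{T_{n-1},p}$, so the hypotheses of Lemma \ref{lemma:diophantsolve} are met over $R = \Z_p[z_1,\dots,z_{n-1}]/T_{n-1}$. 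That lemma yields unique $\sigma,\tau$ with $u_k\tau + v_k\sigma = e_k$ and $\deg(\sigma) < \deg(v_k)$. Uniqueness of $\sigma,\tau$ at each step propagates to uniqueness of $u_{k+1},v_{k+1}$: any other valid lift would differ by a multiple of $p^k$, reducing to another solution of the same degree-constrained Diophantine equation, which must coincide.

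I should be slightly careful about two bookkeeping points. First, monicity: writing $u_{k+1} = u_k + p^k\sigma$ with $\deg(\sigma) < \mdeg(t_n) = \deg(u_k)$ keeps $u_{k+1}$ monic of degree $\deg(u_k)$; then comparing leading coefficients in $t_n \equiv u_{k+1}v_{k+1}$, and using $\deg(u_{k+1}) + \deg(v_{k+1}) = \mdeg(t_n) = \deg(u_0)+\deg(v_0)$ together with $u_{k+1}\equiv u_0$, $v_{k+1}\equiv v_0\pmod{p}$, forces $v_{k+1}$ to be monic with $\deg(v_{k+1}) = \deg(v_0)$. Second, the congruences $u_{k+1}\equiv u_0$, $v_{k+1}\equiv v_0 \pmod{T_{n-1},p}$ are immediate since the updates are multiples of $p^k$ with $k\ge 1$.

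The main obstacle is not any single hard computation but rather ensuring that the Diophantine step genuinely lives in the setting of Lemma \ref{lemma:diophantsolve}: one must confirm that $u_k, v_k$ remain \emph{monic} (not merely having unit leading coefficient) at every stage, because the uniqueness half of that lemma — which is what makes the whole lift well-defined — relies on $v_k$ being monic so that it is neither a zero-divisor nor able to divide a lower-degree polynomial. Keeping the degree of the correction term strictly below $\mdeg(t_n)$ throughout is what guarantees this, so the inductive invariant must explicitly include monicity of both factors, and I would state it that way from the outset.
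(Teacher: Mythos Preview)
Your approach is essentially the same as the paper's: an inductive Hensel construction that at each stage solves a linear Diophantine equation via Lemma~\ref{lemma:diophantsolve}. The structure is correct, but there is a slip in your degree bookkeeping that you should clean up. You write $\deg(v_k)=\mdeg(t_n)$ and later $\deg(u_k)=\mdeg(t_n)$; neither holds in general, since $\deg(u_0)+\deg(v_0)=\mdeg(t_n)$. What you actually need is $\deg(\sigma)<\deg(u_k)=\deg(u_0)$ so that $u_{k+1}=u_k+p^k\sigma$ stays monic of the right degree, and $\deg(\tau)<\deg(v_k)=\deg(v_0)$ so that $v_{k+1}$ does as well. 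Lemma~\ref{lemma:diophantsolve} gives you one of these constraints directly; the other follows (as the paper argues) from the observation that the error term satisfies $\deg(e_k)<\mdeg(t_n)=\deg(u_0)+\deg(v_0)$, since $t_n$ and $u_kv_k$ are both monic of degree $\mdeg(t_n)$. Your alternative argument for monicity of $v_{k+1}$ via ``comparing leading coefficients'' is shakier over a ring with zero-divisors, where degrees need not add; the direct degree bound on $e_k$ is the cleaner route.
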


\begin{proof}
(by induction on $k$): The base case is clear. For the inductive step, we want to be able to write $u_k = u_{k-1} + p^{k-1} a \pmod{T_{n-1},p^k}$ and $v_k = v_{k-1} + p^{k-1} b \pmod{T_{n-1},p^k}$ satisfying
$$
t_n \equiv u_kv_k \pmod{T_{n-1},p^k}.
$$
Multiplying out $u_k,v_k$ gives
$$
t_n \equiv u_kv_k \equiv u_{k-1}v_{k-1} + p^{k-1}(av_{k-1} + bu_{k-1}) \pmod{T_{n-1},p^k}.
$$
Subtracting $u_{k-1}v_{k-1}$ on both sides and dividing through by $p^{k-1}$ gives
$$
\frac{t_n - u_{k-1}v_{k-1}}{p^{k-1}} \equiv av_0 + bu_0 \pmod{T_{n-1},p}.
$$
Let $c = \frac{t_n-u_{k-1}v_{k-1}}{p^{k-1}}$. By Lemma \ref{lemma:diophantsolve}, there exists unique polynomials $\sigma,\tau$ such that $u_0\sigma + v_0\tau \equiv c \pmod{T_{n-1},p}$ with $\deg(\sigma) < \deg(v_0)$ and $\deg(\tau) < \deg(u_0)$ since certainly 
$
\deg(c) = \deg(t_n - u_{k-1}v_{k-1}) < \deg(t_n) = \deg(u_0) + \deg(v_0).
$
Set $a = \tau$ and $b = \sigma$. Because of these degree constraints, $u_k = u_{k-1} + ap^{k-1}$ has the same leading coefficient as $u_{k-1}$ and hence $u_0$; in particular $u_k$ is monic. Similarly, $v_k$ is monic as well. By uniqueness of $\sigma$ and $\tau$, we get uniqueness of $u_k$ and $v_k$. 
\end{proof}

What follows is a formal presentation of the Hensel construction. 
The algorithm HenselLift takes input $u_0,v_0,f\in R/\vect{p}[x]$ where $u_0,v_0$ are monic and $f = u_0v_0 \pmod{p}$. It also requires a bound $B$ that's used to notify termination of the Hensel construction and output FAIL. A crucial part of the Hensel construction is solving the diophantine equation $\sigma u_0 + \tau v_0 = c \pmod{T,p}$. This is done using the extended Euclidean algorithm and Lemma \ref{lemma:diophantsolve}. It's possible that a zero-divisor is encountered in this process. This has to be accounted for. Therefore, we allow the HenselLift algorithm to also output [``ZERODIVISOR'', $u$] if it encounters a zero-divisor $u\in R/\vect{p}$.

\begin{algorithm}[ht]\label{alg:hensellift}
\caption{HenselLift\label{IR}}
\SetKwInOut{Input}{Input}\SetKwInOut{Output}{Output}\SetArgSty{text}
\Input{A zero-dimensional radical triangular set $T\subset \Q[z_1,\dots,z_n]$, a radical prime $p$, polynomials $f\in R[x]$ and $a_0,b_0\in R/\vect{p}[x]$ where $R = \Q[z_1,\dots,z_n]/ T$, and a bound $B$. Further, assume $f \equiv a_0b_0 \pmod{p}$ and $\gcd(a_0,b_0) = 1$.}
\Output{Either polynomials $a,b\in R[x]$ where $f = ab$, FAIL if the bound $B$ is reached, or [``ZERODIVISOR'', $w$] if a zero-divisor $w\in R/\vect{p}$ is encountered.}
Solve $sa_0+tb_0=1$ using the monic extended Euclidean algorithm for $s,t \in R/\vect{p}[x]$\;
\lIf{a zero-divisor $w$ is encountered}{\Return{[``ZERODIVISOR'', $w$]}}
Initialize $u = a_0, v = b_0$ and lift $u$ and $v$ from $R/\vect{p}$ to $R$\;
\For{ $i=1,2,\dots$ }{
    Set $a := $RationalReconstuction($u \pmod{p^i}$)\;
    \lIf{$a\neq$ FAIL, and $a|f$}{\Return{$a$,$f/a$}}
    \lIf{$p^i > 2 B$}{\Return{FAIL}}
    Compute $e := f - uv$ as polynomials over $\Q$\;
    Set $c := (e/p^i) \mod p$ \;
    Solve $\sigma a_0 + \tau b_0 = c$ for $\sigma,\tau \in R/\vect{p}[x]$ using $s a_0 + t b_0 = 1$\;
    Lift $\sigma$ and $\tau$ from $R/\vect{p}$ to $R$ and set $u := u + \tau p^i$ and $v := v + \sigma p^i$\;
}
\end{algorithm}

In general the input $f$ will have fractions thus the error $e$ in our Hensel lifting algorithm
will also have fractions and hence it can never become 0.  Note the size of the rational
coefficients of $e$ grow linearly with $i$ as $f$ is fixed and the magnitude of the
integer coefficients in the product $uv$ are bounded by $p^{2i} (1+\deg u)$.

The standard implementation of Hensel lifting requires a bound on the coefficients of the
factors of the polynomial $f \in R[x]$.  For the base case $n=0$ where $R[x]=\Q[x]$ 
one can use the Mignotte bound (see \cite{MCA}).
For the case $n=1$ Weinberger and Rothschild \cite{rothschild} give
a bound but note that it is large.  We do not know of any bounds for the general case $n>1$ and hypothesize that they would be bad.
Therefore a more ``engineering''-esque approach is needed.
Since we do not know whether the input zero-divisor $a_0$ is the image of a monic factor of $f$,
we repeat the Hensel lifting each time a zero-divisor is encountered 
in our modular GCD algorithm, first using a bound of $2^{60}$,
then $2^{120}$, then $2^{240}$ and so on, until the coefficients of any monic factor of $f$ can
be recovered using rational number reconstruction.

The prime application of Hensel lifting will be as a solution to the zero-divisor problem. This is the goal of the HandleZeroDivisorHensel algorithm. The algorithm assumes a zero-divisor modulo a prime $p$ has been encountered by another algorithm (such as our modular gcd algorithm). It attempts to lift this zero-divisor using HenselLift. If HenselLift encounters a new zero-divisor $w$, it recursively calls HandleZeroDivisorHensel($w$). If the Hensel lifting fails (i.e., a bound is reached), it instructs the algorithm using it to pick a new prime. If the Hensel lifting succeeds in finding a factorization $t_n = uv \pmod{T_{n-1}}$ over $\Q$, then the algorithm using it works recursively on new triangular sets $T^{(u)}$ and $T^{(v)}$ where $t_n$ is replaced by $u$ and $v$, respectively. 

\begin{algorithm} \label{alg:handlezd:hensel}
\caption{HandleZeroDivisorHensel\label{IR}}
\SetKwInOut{Input}{Input}\SetKwInOut{Output}{Output}\SetArgSty{text}
\Input{A zero-dimensional radical triangular set $T\subset \Z_p[z_1,\dots,z_n]$ modulo a prime $p$ and a zero-divisor $u_0\in R$ where $R = \Z_p[z_1,\dots,z_n]/T$. Assume $\mvar(u) = n$.}
\Output{A message indicating the next steps that should be carried out, including any important parameters\;}
    Set $v_0 := $Quotient($t_n, u_0$)$\pmod{T_{n-1},p}$\;
    \lIf{$v_0 = $ [``ZERODIVISOR'', $w$]}{\Return{HandleZeroDivisorHensel($w$)}}
    {\bf if} the global variable $B$ is unassigned {\bf then }set $B := 2^{60}$ {\bf else} set $B := B^2$\;
    Set $u,v := $HenselLift($t_n,u_0,v_0, B$)\;
    \lIf{$u = $ [``ZERODIVISOR'', $w$]}{\Return{HandleZeroDivisorHensel($w$)}}
    \lElseIf{$u =$ {\it FAIL}}{\Return{\it FAIL}. This indicates that a new prime or bigger bound is needed}
    \lElse{\Return{$u$ and $v$}}
\end{algorithm}

We'd like to make it clear that this is not the first case of using $p$-adic lifting techniques on triangular sets. In particular, lifting the triangular decomposition of a regular chain has been used by Dahan, Maza, Schost, Wu, Xie in \cite{lifting}.

\section{The Modular Algorithm}

The main content of this section is to fully present and show the correctness of our modular algorithm. First, suppose a zero-divisor $w$ over $\Q$ is found while running the modular algorithm. It will be used to factor $t_k = uv \pmod{T_{k-1}}$ where $u$ and $v$ are monic with main variable $z_k$. From here, the algorithm proceeds to split $T$ into $T^{(u)}$ and $T^{(v)}$ where $t_k$ is replaced with $u$ in $T^{(u)}$ and $v$ in $T^{(v)}$. Of course $t_i$ is reduced for $i > k$ as well. The algorithm then continues recursively. Once the recursive calls are finished, we could use the CRT to combine $\gcd$s into a single gcd, but this would be very time consuming. Instead, it's better to just return both gcds along with their associated triangular sets. This approach is similar to Hubert's in \cite{hubert} which she calls a pseudo-gcd. Here, we refer to this as a component-wise gcd, or c-gcd for short:

\begin{defn}
Let $R$ be a commutative ring with unity such that $R \cong \prod_{i=1}^r R_i$ and $a,b\in R[x]$. Let $\pi_i\colon R \to R_i$ be the natural projections. A component-wise gcd of $a$ and $b$ is a tuple $(g_1,\dots,g_r)\in\prod_{i=1}^r R_i[x]$ where each $g_i = \gcd(\pi_i(a),\pi_i(b))$ and $\lc(g_i)$ is a unit.
\end{defn}

The modular algorithm's goal will be to compute $\cgcd(a,b)$ given $a,b\in R[x]$ where $R = \Q[z_1,\dots,z_n]/T$ and $T\subset \Q[z_1,\dots,z_n]$ is a radical triangular set.
As with all modular algorithms, it's possible that some primes are unlucky. We also prove this only happens for a finite number of cases.

\begin{defn}
Let $T\subset \Q[z_1,\dots,z_n]$ be a radical triangular set, and $R = \Q[z_1,\dots,z_n]/T$. Let $a,b\in R[x]$ and $g = \cgcd(a,b)$. A prime number $p$ is an {\it unlucky prime} if $g$ doesn't remain a componentwise greatest common divisor of $a$ and $b$ modulo $p$. 
Additionally, a prime is {\it bad} if it divides any denominator in $T$, any denominator in $a$ or $b$, or if $\lc(a)$ or $\lc(b)$ vanishes modulo $p$.
\end{defn}

\begin{theorem}
Let $T\subset \Q[z_1,\dots,z_n]$ be a radical triangular set, and $R = \Q[z_1,\dots,z_n]/T$. Let $a,b\in R[x]$ and $g = \cgcd(a,b)$. Only finitely many primes are unlucky.
\label{thm:unluckyprimes}
\end{theorem}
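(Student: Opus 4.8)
The plan is to use the structure theorem (Theorem~\ref{thm:gcdsexist}) to reduce the statement to a finite list of ordinary ``unlucky prime'' statements over number-field-like factors, each of which is classical. First I would write $R \cong \prod_{i=1}^r F_i$ with each $F_i$ a field (valid since $T$ is radical and zero-dimensional), and correspondingly $R[x] \cong \prod_i F_i[x]$, $a \mapsto (a_i)_i$, $b \mapsto (b_i)_i$, and $g \mapsto (g_i)_i$ with $g_i = \gcd(a_i,b_i)$ monic in $F_i[x]$. The key object attached to each component is an extended Euclidean representation: by Corollary~\ref{exteucrep} applied in $F_i[x]$, choose $A_i,B_i \in F_i[x]$ with $a_i A_i + b_i B_i = g_i$, and also choose cofactors $\alpha_i,\beta_i$ with $a_i = g_i\alpha_i$, $b_i = g_i\beta_i$ and $A_i'\alpha_i + B_i'\beta_i = 1$ (the gcd of the cofactors being $1$). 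Clearing denominators, all of these data — the idempotents defining the product decomposition, the polynomials $A_i,B_i,\alpha_i,\beta_i$, the leading coefficients of $g_i$ and their inverses — involve only finitely many rational numbers, so they reduce well modulo all but finitely many primes $p$.

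Next I would define the candidate finite set of ``bad for this argument'' primes: those dividing any denominator in $T$, in $a$, in $b$, in the idempotents, or in the $A_i,B_i,\alpha_i,\beta_i$; those for which $T \bmod p$ fails to be radical (finitely many, by the previous theorem); and those for which $\lc(a_i)$, $\lc(b_i)$, or $\lc(g_i)$ vanishes mod $p$, or for which $\deg(g_i \bmod p) < \deg g_i$. For any prime $p$ outside this set, reduction mod $p$ commutes with the product decomposition, so $R/\vect{p} \cong \prod_i (F_i \bmod p)$, where each $F_i \bmod p$ is again a product of fields, and $g_i \bmod p$ retains its degree and a unit leading coefficient. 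Then I would verify the two defining properties of a componentwise gcd mod $p$: (1) $g_i \bmod p$ divides both $a_i \bmod p$ and $b_i \bmod p$ — immediate from reducing $a_i = g_i\alpha_i$ and $b_i = g_i\beta_i$; and (2) any common divisor of $a_i \bmod p$ and $b_i \bmod p$ divides $g_i \bmod p$ — which follows by reducing $a_i A_i + b_i B_i = g_i$ mod $p$. Hence $g_i \bmod p$ is a gcd of $a_i \bmod p$ and $b_i \bmod p$ in each component, with unit leading coefficient, so $g \bmod p$ is a componentwise gcd.

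The one genuine subtlety — and the step I expect to be the main obstacle — is that reducing mod $p$ may \emph{refine} the product decomposition: a single field factor $F_i$ can split into several factors mod $p$, so the index set changes and one must argue that $g_i \bmod p$ restricted to each new sub-factor is still a gcd with unit leading coefficient. Property (2) must be checked component-by-component against this finer decomposition; the extended Euclidean identity $a_i A_i + b_i B_i = g_i$ projects correctly onto each sub-factor, and the leading coefficient stays a unit because it was a unit in $F_i \bmod p$ hence in every factor of it, so the argument goes through, but it requires being careful that ``componentwise gcd'' is read with respect to the decomposition actually induced by $T \bmod p$. I would close by noting that the number of excluded primes is finite: finitely many from denominators, finitely many from the nonradical-prime theorem, and finitely many from leading-coefficient and degree-drop conditions (each such condition excludes the prime divisors of a fixed nonzero integer).
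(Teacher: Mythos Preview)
Your proposal is correct and follows essentially the same line as the paper's proof: decompose $R$ into a finite product, in each component write $a_i = g_i\alpha_i$, $b_i = g_i\beta_i$, use the extended Euclidean representation (Corollary~\ref{exteucrep}) to certify coprimality of the cofactors, and then observe that all of these identities survive reduction modulo any prime avoiding the finitely many denominators involved. The paper's version is slightly terser --- it works with a single Bezout identity $\overline{a}_iA_i+\overline{b}_iB_i=1$ on the cofactors and multiplies through by $g_i$, and it does not explicitly discuss the possible refinement of the product decomposition modulo $p$ that you flag; your treatment of that last point is a genuine clarification rather than a deviation.
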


\begin{proof}
Let $R[x] \cong \prod R_i[x]$ where $g = (g_i)$ and $g = \gcd(a,b)\in R_i[x]$. Let $a \mapsto (a_i)$ and $b \mapsto (b_i)$. If $g_i = 0$, then $a_i = 0$ and $b_i = 0$ and no primes are unlucky since, $\gcd(0,0) \equiv 0 \pmod{p}$. Suppose $g_i = \gcd(a_i,b_i)$ is nonzero and monic. Let $\overline{a}_i$ and $\overline{b}_i$ be the cofactors $a_i = g_i \overline{a}_i$ and $b_i = g_i \overline{b}_i$. I claim $\gcd(\overline{a}_i, \overline{b}_i) = 1$. To show this, consider a common divisor $f$ of $\overline{a}_i$ and $\overline{b}_i$. Note that $fg_i \divs a_i$ and $fg_i \divs b_i$. Since $g_i = \gcd(a_i,b_i)$, it follows that $fg_i \divs g_i$; so there exists $q \in R_i[x]$ where $fg_iq = g_i$. Rewrite this equation as $(fq-1)g_i = 0$. Well, $g_i$ is monic in $x$, and so can't be a zero-divisor. This implies $fq - 1 = 0$ and so indeed $f$ is a unit. Thus, $\gcd(\overline{a}_i,\overline{b}_i) = 1$. By the extended Euclidean representation (Corollary \ref{exteucrep}), there exists $A_i,B_i\in R_i[x]$ where $\overline{a}_iA_i + \overline{b}_iB_i = 1$.

Let $p$ be a prime where $p$ doesn't divide any of the denominators in $a_i,\overline{a}_i,A_i,b_i,\overline{b}_i,B_i,g_i$. Then, we can reduce the equations 
\begin{gather}
\overline{a}_iA_i + \overline{b}_iB_i = 1 \pmod{p},\\
a_i = g_i \overline{a}_i \pmod{p}, \qquad b_i = g_i \overline{b}_i \pmod{p}.
\end{gather}
We will now show that $g_i = \gcd(a_i,b_i) \pmod{p}$. By (2), we get $g_i$ is a common divisor of $a_i$ and $b_i$ modulo $p$. Consider a common divisor $c$ of $a_i$ and $b_i$ modulo $p$. Multiplying equation (1) through by $g_i$ gives $a_iA_i + b_iB_i = g_i \pmod{p}$. Clearly, $c \divs g_i$ modulo $p$. Thus, $g_i$ is indeed a greatest common divisor of $a_i$ and $b_i$ modulo $p$. As there are finitely many primes that can divide the denominators of fractions in the polynomials $a_i,\overline{a}_i,A_i,b_i,\overline{b}_i,B_i,g_i$, there are indeed finitely many unlucky primes.
\end{proof}

\begin{ex}
This example illustrates how the IsRadical function can run into a zero-divisor. 
Consider $T = \{z_1^2-1, z_2^3 + 9z_2^2 + \frac{3z_1+51}{2}z_2 - \frac{53z_1+3}{2}\}$. We will be running the algorithm over $\Q$ to illustrate. First, it would determine that $T_1 = \{z_1^2-1\}$ is radical. Now, when it is running the Euclidean algorithm on $t_2 = z_2^3 + 9z_2^2 + \frac{3z_1+51}{2}z_2 - \frac{53z_1+3}{2}$ and $t_2' = 3z_2^2 + 18z_2 + \frac{3z_1+51}{2}$, the first remainder would be $(z_1 - 1)z_2 -28z_1-27$. However, $z_1 - 1$ is a zero-divisor, so the algorithm would output [``ZERODIVISOR'', $z_1 - 1$]. This same zero-divisor will show up for every odd prime ($2$ appears in the denominator of $t_2$ and so shouldn't be considered). This explains why we can't just simply pick a new prime in
ModularC-GCD if IsRadical encounters a zero-divisor.
\end{ex}

We would like to give a high level overview of the algorithm since looking at pseudo-code is not always the best way to understand. Please see Algorithm \ref{alg:modcgcd} for pseudo-code. The inputs are $a,b\in R[x]$ where $T$ is a radical triangular set and $R = \Q[z_1,\dots,z_n]/T$,
\begin{enumerate}
\item Pick a new prime $p$ that is not bad.
\item Test if $p$ is a radical prime.
      \begin{enumerate}[2.1]
      \item If a zero-divisor is encountered, resolve it using HandleZeroDivisorHensel.
      \item If $p$ is not radical,  go back to step 1. Otherwise, continue as $p$ is a radical prime.
      \end{enumerate}
\item Use the monic Euclidean algorithm to compute $g_p = \gcd(a,b) \pmod{p}$.
\begin{enumerate}[3.1]
      \item If a zero-divisor is encountered, resolve it using HandleZeroDivisorHensel.
      \item Combine all gcds computed modulo primes of lowest degree using Chinese remaindering and rational reconstruction into a polynomial $h$ over $\Q$.
\item Test if $h\divs a$ and $h\divs b$. If the division test succeeds, return $h$. Otherwise, we need more primes, so go back to step 1.
      \end{enumerate}
\end{enumerate}

\noindent
The crux of ModularC-GCD is an algorithm to compute $\gcd(a,b)$ for the input polynomials $a,b$ reduced modulo a prime.
The algorithm we use for this is MonicEuclideanC-GCD below. It is a variant of the monic Euclidean algorithm.
For computing inverses, the extended Euclidean algorithm can be used; modifying MonicEuclideanC-GCD to do this
is straightforward. 

\begin{algorithm}[ht!]\label{alg:moniceuc}
\caption{MonicEuclideanC-GCD\label{IR}}
\SetKwInOut{Input}{Input}\SetKwInOut{Output}{Output}\SetArgSty{text}
\Input{A ring $R$ as specified in the opening of the section, and two polynomials $a,b\in R[x]$. Assume $\deg_x(a) \geq \deg_x(b)$.}
\Output{Either monic $\gcd(a,b)$ or an error if a zero-divisor is encountered.}
\If{$b=0$}{
   \lIf{$\lc(a)$ is a zero-divisor}{\Return{[``ZERODIVISOR'', $\lc(a)$]}}
   \Return{$\lc(a)^{-1}a$}
}
Set $r_0 := a$ and $r_1 := b$\;
$i := 1$\;
\While{$r_{i} \neq 0$}{
    \lIf{$\lc(r_{i})$ is a zero-divisor}{\Return{[``ZERODIVISOR'', $\lc(r_{i})$]}}
    $r_i := \lc(r_i)^{-1}r_i$\;
    Set $r_{i+1}$ as the remainder of $r_{i-1}$ divided by $r_i$\;
    $i := i + 1$\;
}
\Return{$r_{i-1}$}
\end{algorithm}

\newcommand\mycommfont[1]{\small\ttfamily\textcolor{black}{#1}}
\SetCommentSty{mycommfont}

A short discussion about the zero-divisors that may appear is warranted. To compute an inverse, the modular algorithm will be using the extended Euclidean algorithm. The first step would be to invert a leading coefficient $u$ of some polynomial. This requires a recursive call to ExtendedEuclideanC-GCD($u, t_k$)$\pmod{T_{k-1}}$ where $z_k = \mvar(u)$. If $u$ isn't monic, then it would again attempt to invert $\lc(u)$. Because of the recursive nature, it will keep inverting leading coefficients until it succeeds or a monic zero-divisor is found. The main point is that we may assume that the zero-divisors encountered are monic.

Now that all algorithms have been given, we give a proof of correctness for ModularC-GCD. First, we show that a finite number of zero-divisors can be encountered. This ensures that the algorithm terminates.  After that, we prove a lemma about the primes that may occur in a monic factorization modulo the triangular set; note this is nontrivial by example 3. This a key step in the proof that the returned value of ModularC-GCD is correct. The proof will require the concept of localization, the formal process of including denominators in a ring; see Bosch \cite{bosch} for details. For notation purposes, we let $S$ be a set of prime numbers and define $R_S$ as the localization of $R$ with respect to $S$. Note that when $R = \Q[z_1,\dots,z_n]/T$, it's required that any prime dividing any $\den(t_i)$ must be included in $S$ for $R_S$ to be a ring. We will also need the concept of the iterated resultant. Given a triangular set $T$, the iterated resultant of $f$ with $T$ is
$$
\iterres(f,T) = \iterres(\res(f,t_n), T_{n-1}), \qquad \iterres(f,\{t_1\}) = \res(f,t_1).
$$
One important property is that if $f,T\in R'[x] \subset R[x]$ where $R'$ is a subring, then there exist $A,B_1,\dots,B_n\in R'[x]$ where $Af + B_1t_1 + \cdots + B_nt_n = \iterres(f,T)$. This follows from the same proof as given in Theorem 7.1 of \cite{gcl}. Another important property is that $\iterres(f,T) = 0$ if and only if $f$ is a zero-divisor, see \cite{theoryoftrisets}.

\begin{prop}
Let $R = \Q[z_1,\dots,z_n]/T$ where $T$ is a radical zero-dimensional triangular set. Put $a,b\in R[x]$. A finite number of zero-divisors are encountered when running ModularC-GCD($a,b$).
\end{prop}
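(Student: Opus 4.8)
The plan is to bound the number of distinct zero-divisors that can ever be encountered, across all recursive calls and all primes, during a single top-level run of ModularC-GCD. The key structural observation is that every zero-divisor encountered is monic (as noted in the discussion before the proposition) and, crucially, that each time a zero-divisor is \emph{successfully} resolved by HandleZeroDivisorHensel, the triangular set splits: some $t_k$ is replaced by a proper factorization $t_k = uv \pmod{T_{k-1}}$, and the algorithm recurses on the strictly ``smaller'' triangular sets $T^{(u)}$ and $T^{(v)}$. Since $\deg(u),\deg(v) < \mdeg(t_k)$ and $\mdeg(t_k) \ge 1$, the total degree $\prod_i \mdeg(t_i)$ of the triangular set strictly decreases at each split; because this is a positive integer, there are at most $\deg(T)$ splits along any branch and at most finitely many branches. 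So it suffices to bound the number of zero-divisors encountered \emph{between} two consecutive splits, i.e.\ within the processing of a single fixed triangular set.

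So fix a radical zero-dimensional triangular set $T'$ arising somewhere in this recursion. First I would argue that only finitely many \emph{primes} are ever used while $T'$ is active: ModularC-GCD picks a new prime only when $p$ is bad, when $p$ is non-radical, when HenselLift returns FAIL, or when the division test in step~3.3 fails. Bad primes divide a fixed finite set of denominators/leading coefficients; non-radical primes are finite in number by the theorem proved earlier in the excerpt (all but finitely many primes are radical); the FAIL case only triggers finitely often before the bound $B$, which is squared each time, exceeds the (fixed, though possibly unknown) true coefficient bound of the monic factors of $f$; and the division-test failures are controlled by Theorem~\ref{thm:unluckyprimes}, since once $p$ is lucky and radical the reconstructed $h$ divides $a$ and $b$. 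Hence finitely many primes $p$ are used with $T'$ active. For each such $p$, the quantity being computed is a gcd via MonicEuclideanC-GCD, whose remainder sequence has strictly decreasing $x$-degree, so at most $\deg_x(b)+1$ leading coefficients are tested for being zero-divisors; likewise IsRadicalPrime performs only finitely many gcd computations. Thus for each active prime only finitely many zero-divisors can surface, and finitely many active primes times finitely many per prime gives a finite count per triangular set.

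The one subtlety — and the main obstacle — is the possibility that HandleZeroDivisorHensel is called, encounters a \emph{new} zero-divisor $w$ inside HenselLift (or inside the Quotient computation), and recurses on $w$ \emph{without} having split the triangular set. I need to rule out an infinite descent of such recursive calls. The point is that each such nested zero-divisor $w$ lives in $R/\vect{p}$ for the \emph{same} prime $p$ and has main variable some $z_j$, and the recursion $\textsf{HandleZeroDivisorHensel}(w)$ then works with $t_j$ and its cofactor modulo $T_{j-1}$; either this eventually produces a genuine factorization (a split, handled above), or it bottoms out. Concretely, a new zero-divisor $w$ forces a strictly smaller sub-problem: the EEA used inside HenselLift operates on polynomials whose $x$-degrees are bounded by $\deg f$, and each encountered zero-divisor has main variable in $\{z_1,\dots,z_n\}$ — a finite set — so combining the finiteness of primes with induction on $(n, \deg_x)$ closes the loop. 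I would phrase this as: assign to each invocation the pair $(\text{degree of current triangular set},\ \text{number of the active prime in the enumeration},\ \deg_x \text{ of the polynomial whose gcd/EEA is in progress})$, observe it decreases lexicographically (or stays in a finite set) along every chain of calls, and conclude the recursion tree is finite, hence only finitely many zero-divisors are encountered in total.
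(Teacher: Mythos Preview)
Your overall architecture matches the paper's: both argue that each successful split strictly decreases $\deg(T)=\prod_i \mdeg(t_i)$, so it suffices to bound the number of primes (and hence zero-divisors) consumed while a fixed triangular set $T'$ is active. You and the paper also agree on why bad primes, non-radical primes, and division-test failures are each finite in number.

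The gap is in your handling of the FAIL case. You write that FAIL ``only triggers finitely often before the bound $B$ \dots\ exceeds the true coefficient bound of the monic factors of $f$.'' This presupposes that the zero-divisor $u_0\pmod p$ handed to HenselLift is always the image of a genuine monic factor of $t_k$ over $\Q$. It need not be: $u_0$ may be a \emph{spurious} zero-divisor that exists only modulo $p$ (because $p$ happens to make an intermediate leading coefficient in the Euclidean algorithm share a nontrivial factor with $t_k$). For such $u_0$ there is no rational factor to lift to, so HenselLift will FAIL regardless of how large $B$ has grown; enlarging $B$ does not bound the number of such primes. The paper closes this hole by a different device: it considers a theoretical run of the monic Euclidean algorithm on $a,b$ over $\Q$, observes that for all but finitely many primes the remainder sequence mod $p$ is the reduction of this rational sequence, and hence any $u_p$ that is a zero-divisor mod $p$ but whose rational preimage $u$ satisfies $\gcd(u,t_k)=1$ over $\Q$ makes $p$ unlucky in the sense of Theorem~\ref{thm:unluckyprimes} --- a finite set. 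You need an argument of this type, tying non-liftable zero-divisors back to a fixed finite collection of rational data, to finish.
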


\begin{proof}
We use induction on the degree of the extension $\delta = d_1\cdots d_n$ where $d_i = \mdeg(t_i)$. If $\delta = 1$, then $R=\Q$ so no zero-divisors occur.

First, there are a finite number of non-radical primes. So we may assume that $T$ remains radical modulo any chosen prime.
Second, consider (theoretically) running the monic Euclidean algorithm over $\Q$ where we split the triangular set if a zero-divisor is encountered. In this process, a finite number of primes divide either denominators or leading coefficients; so we may assume the algorithm isn't choosing these primes without loss of generality.

Now, suppose a prime $p$ is chosen by the algorithm and a zero-divisor $u_p$ is encountered modulo $p$ at some point of the algorithm. 
This implies $\gcd(u_p, t_k) \not\equiv 1 \pmod{T_{k-1}, p}$. We may assume that $u_p = \gcd(u_p,t_k) \pmod{T_{k-1}, p}$ and that $u_p$ is monic; this is because the monic Euclidean algorithm will only output such zero-divisors. If $u_p$ lifts to a zero-divisor over $\Q$, the algorithm constructs two triangular sets, each with degree smaller than $\delta$. So by induction, a finite number of zero-divisors occur in each recursive call. Now, suppose lifting fails. This implies there is some polynomial $u$ over $\Q$ that reduces to $u_p$ modulo $p$ and appears in the theoretical run of the Euclidean algorithm over $\Q$. Note that $\gcd(u,t_k) = 1 \pmod{T_{k-1}}$ over $\Q$ since we're assuming the lifting failed. By Theorem \ref{thm:unluckyprimes}, this happens for only a finite amount of primes. Thus, a finite number of zero-divisors are encountered.
\end{proof}

\begin{algorithm}[ht!]\label{alg:modcgcd}
\caption{ModularC-GCD\label{IR}}
\SetKwInOut{Input}{Input}\SetKwInOut{Output}{Output}\SetArgSty{text}
\Input{A zero-dimensional, radical triangular set $T\subset \Q[z_1,\dots,z_n]$ and two polynomials $a,b\in R[x]$ where $R = \Q[z_1,\dots,z_n]/T$. Assume $\deg(a) \geq \deg(b) \geq 0$.}
\Output{A tuple consisting of comaximal triangular sets $T^{(i)}$ such that $T = \bigcap T^{(i)}$ and 
$g_i = \gcd(a,b) \mod \langle T^{(i)} \rangle$ where $g_i = 0$ or $\lc(g_i)$ is a unit.}
Initialize $dg := \deg(b)$, $M = 1$\;
{\bf Main Loop:} Pick a prime $p$ that is not bad; 
Test if $p$ is a radical prime, $N := $ isRadicalPrime($T,p$)\;
\uIf{$N = $ [``ZERODIVISOR'', $u$]}{
    $K :=$ HandleZeroDivisorHensel($u$)\;
    \lIf{$K =$ FAIL}{Pick a new prime, go to Main Loop}
    \ElseIf{$K$ is a factorization $t_k = wv \pmod{T_{k-1}}$}{
        Create triangular sets $T^{(w)}$ and $T^{(v)}$ where $t_k$ is replaced by $w$ and $v$, respectively\;
        \Return{ModularC-GCD($a,b$)$ \pmod{T^{(w)}}$, ModularC-GCD($a,b$)$\pmod{T^{(v)}}$}
    }
}
\ElseIf{$N = $ False}{Go to Main Loop\; }
Set $g := \gcd(a,b) \mod \langle T,p \rangle$ using algorithm MonicEuclideanC-GCD\;
\uIf{$g = $ [``ZERODIVISOR'', $u$]}{ $K :=$ HandleZeroDivisorHensel($u$)\;
    \lIf{$K =$ FAIL}{Pick a new prime: Go to Main Loop}
    \ElseIf{$K$ is a factorization $t_k = wv\pmod{T_{k-1}}$}{
        Create triangular sets $T^{(w)}$ and $T^{(v)}$ where $t_k$ is replaced by $w$ and $v$, respectively\;
        \Return{ModularC-GCD($a,b$)$\pmod{T^{(w)}}$, ModularC-GCD($a,b$)$\pmod{T^{(v)}}$}
    }}
\Else
{    
    \uIf{$\deg(g) = dg$}{
        The chosen prime seems to be lucky\;
        Use CRT to combine $g$ with other gcds (if any), store the result in $G$ and set $M := M \times p$\;
    }
    \uElseIf{$\deg(g) > dg$}{
        The chosen prime was unlucky, discard $g$\;
        Pick a new prime: Go to Main Loop\;
    }
    \ElseIf{$\deg(g) < dg$}{
        All previous primes were unlucky, discard $G$\;
        Set $G := g$, $M := p$, and $dg := \deg(g)$\;
    }
    Set $h := $ RationalReconstruction($G \pmod M$)\;
    \lIf{$h \ne {\rm FAIL}$ and $h \divs a$ and $h \divs b$}{\Return $h$}
    Go to Main Loop\;}
\end{algorithm}

\begin{lemma}
Let $T$ be a radical, zero-dimensional triangular set of $F = \Z[z_1,\dots,z_n]$. Suppose $f,u\in R[x]$ are monic such that $u\divs f$. Let 
$$ S = \{\text{prime numbers } p\in\Z : p\text{ is a nonradical prime with respect to $T$}, \text{ or } p\divs \den(f)\}. $$
Then, $u\in F_S[x]/T$. In particular, the primes appearing in denominators of a factorization are either nonradical primes or divisors of $\den(f)$.
\label{lemma:denoffators}
\end{lemma}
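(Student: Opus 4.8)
The plan is to prove the sharper local statement: for every prime $p\notin S$ the coefficients of $u$ already lie in $\Lambda:=\Z_{(p)}[z_1,\dots,z_n]/T$. Since $\bigcap_{p\notin S}\Z_{(p)}=\Z_S$ and membership in $F_S[x]/T$ is checked coefficientwise on the reduced monomial normal form modulo $T$, intersecting these local statements over $p\notin S$ gives $u\in F_S[x]/T$. So fix $p\notin S$. Because $p$ is a radical prime it divides no $\den(t_i)$, hence $T\subset\Z_{(p)}[z_1,\dots,z_n]$, the ring $\Lambda$ is a free $\Z_{(p)}$-module on the reduced monomials, $\Lambda\otimes_{\Z_{(p)}}\Q=R$, and $\Lambda/p\Lambda=\F_p[z_1,\dots,z_n]/\langle T\bmod p\rangle$. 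Also $p\nmid\den(f)$, so $f\in\Lambda[x]$ and $f$ remains monic there.

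The key claim I would establish is that $\Lambda$ is integrally closed in its total ring of fractions $R$. Since $\Lambda$ is $\Z_{(p)}$-free, $p$ is a nonzerodivisor, so $\Lambda$ embeds in the reduced ring $R$ (thus $\Lambda$ is reduced), $p$ lies in no minimal prime, and $\Lambda$ has Krull dimension one; moreover $\Lambda$ is module-finite over the local ring $\Z_{(p)}$, so every maximal ideal of $\Lambda$ contains $p$ and has height one. Now I use radicality of $p$: $\Lambda/p\Lambda$ is reduced, hence by Theorem \ref{thm:gcdsexist} a finite product of fields. Consequently, for a maximal ideal $\mathfrak m$ of $\Lambda$, the ring $\Lambda_{\mathfrak m}/p\Lambda_{\mathfrak m}$ is a field, i.e.\ $\mathfrak m\Lambda_{\mathfrak m}=p\Lambda_{\mathfrak m}$ is principal; combined with the Krull intersection theorem and reducedness, this forces $\Lambda_{\mathfrak m}$ to be a discrete valuation ring. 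Localizing at a minimal prime instead gives a field. So every localization of $\Lambda$ at a prime is a field or a DVR, whence $\Lambda$ is regular, hence normal; in particular the factor idempotents of $R$ are integral over $\Lambda$, so $\Lambda\cong\prod_i\Lambda^{(i)}$ with each $\Lambda^{(i)}$ a normal domain whose fraction field is a number field.

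It remains to transport the factorization through this decomposition, which is the classical Gauss-type argument carried out componentwise. Writing $f\mapsto(f_i)$ and $u\mapsto(u_i)$, each $f_i$ is monic in $\Lambda^{(i)}[x]$ and $u_i$ is a monic divisor of $f_i$ in $\mathrm{Frac}(\Lambda^{(i)})[x]$. The roots of $u_i$ in an algebraic closure are roots of the monic $f_i$, hence integral over $\Lambda^{(i)}$, so the coefficients of $u_i$ (elementary symmetric functions of those roots) are integral over $\Lambda^{(i)}$ and lie in its fraction field, hence lie in $\Lambda^{(i)}$ by normality. Thus $u\in\prod_i\Lambda^{(i)}[x]=\Lambda[x]$, and letting $p$ range over all primes outside $S$ finishes the proof; the closing assertion of the lemma is just the restatement that any prime occurring in a denominator of $u$ must lie in $S$.

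The main obstacle is the bridge in the second paragraph: passing from the computational notion ``$p$ is a radical prime'' to the ring-theoretic fact that $\Lambda$ is normal — essentially the statement that ``$T$ radical modulo $p$'' is the same as ``$T$ unramified at $p$''; everything else is bookkeeping. An alternative route, closer to the tools already in place, is induction on $n$ via iterated resultants: over $R'=\Q[z_1,\dots,z_{n-1}]/T_{n-1}$ one clears denominators in a monic factorization of $f$ modulo $t_n$ by multiplying through by the coefficients in the identity $Af+\sum_j B_jt_j=\iterres(f,T)$, so the denominators newly introduced at level $n$ divide $\iterres(\res_{z_n}(t_n,t_n'),T_{n-1})$; since $p$ fails to be radical exactly when it divides one of the integers $\iterres(\res_{z_i}(t_i,t_i'),T_{i-1})$ (Lemma \ref{lemma:radicaltest} read modulo $p$), the induction closes with the same set $S$.
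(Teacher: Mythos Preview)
Your proof is correct and takes a genuinely different route from the paper's. The paper proceeds by induction on $n$: it explicitly computes the triangular decomposition of $T$ into comaximal triangular sets $T^{(i)}$ with each $\Q[z_1,\dots,z_n]/T^{(i)}$ a number field, invokes the classical fact that the denominator of a monic factor over a number field divides the field discriminant (hence any prime in the denominator is nonradical for that component), and then controls the denominators introduced when recombining the components via CRT by bounding them with iterated resultants $\iterres(\res(t_k^{(i)},t_k^{(j)}),T_{k-1})$, whose prime divisors are again nonradical. Your argument instead fixes a single prime $p\notin S$, observes that radicality of $T\bmod p$ makes $\Lambda=\Z_{(p)}[z_1,\dots,z_n]/T$ a one-dimensional reduced Noetherian ring whose localizations at maximal ideals have principal maximal ideal $p\Lambda_{\mathfrak m}$, hence are DVRs; normality of $\Lambda$ then yields the result by the standard Gauss argument, and intersecting over $p\notin S$ finishes. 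Your approach is cleaner and conceptually identifies the mechanism (radical prime $\Leftrightarrow$ $\Lambda$ regular at $p$), avoiding both the induction and the explicit resultant bookkeeping; the paper's approach, on the other hand, stays closer to the computational objects actually manipulated by the algorithm and makes the link to discriminants visible. The alternative you sketch via iterated resultants is essentially the paper's argument compressed.
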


\begin{proof}
Proceed by induction on $n$. Consider the base case $n = 1$. Let $t_1 = a_1a_2\cdots a_s$ be the factorization into monic irreducibles. Note that $a_i,a_j$ are relatively prime since $t_1$ is square-free and $a_1,a_2\in F_S$ by Gauss's lemma (since $S$ contains any primes dividing $\den(t_1)$). Let $u_i = u \mod a_i$ and $f_i = f \mod a_i$. By known results from algebraic number theory (see Theorem 3.2 of \cite{enc} for instance), $\den(u_i)$ consists of primes dividing $\Delta(a_i)$ or $\den(f_i)$. Note that any prime $p\divs \Delta(a_i)$ would force $a_i$, and hence $t_1$, to not be square-free modulo $p$. This would imply $p$ is nonradical and so is contained in $S$; in partiulcar, $u_i \in F_S[x]$.

The last concern is if combining $(u_1,u_2,\dots,u_s) \mapsto u$ introduces another prime $p$ into the denominator. We prove this can only happen if $p$ is nonradical. It's sufficient to show that combining two extensions is enough since we can simply combine two at a time until the list is exhausted. Now, consider the resultant $r = \res_{z_1}(a_1,a_2)$. There are polynomials $A,B\in F_S$ where $Aa_1 + Ba_2 = r$. Note that any prime $p\divs r$ forces $\gcd(a_1,a_2) \neq 1 \pmod{p}$ and so $t_1$ wouldn't be square-free; in particular, $A/r,B/r\in F_S$. Now, let $v = (A/r)a_1u_2 + (B/r)a_2u_1$. Note that $v\mod a_1 = (B/r)a_2u_1 = (1-(A/r)a_1)u_1 = u_1$. Similarly, $v\mod a_2 = u_2$. Since the CRT gives an isomorphism, $u = v$ and indeed $u\in F_S[x]$. This completes the base case.

For the general case, we will generalize each step used in the base case. Instead of just factoring $t_1$, we decompose $T$ as a product of comaximal triangular sets known as its triangular decomposition. In place of discriminants of polynomials, we use discriminants of algebraic number fields. Finally, for the combining, iterated resultants are used instead of resultants.

With that in mind, start by decomposing $T$ into its triangular decomposition, which can be done in the the following way:
\begin{enumerate}
\item Factor $t_1 = a_1a_2\cdots a_{s_1}$ into relatively prime monic irreducibles over $\Q$ as in the base case. This gives $\Q[z_1]/T_1$ is isomorphic to the product of fields $\prod_i \Q[z_1]/a_i$. By Gauss's lemma, a prime dividing the $\den(a_i)$ must also divide $\den(f)$. In particular, $a_i\in F_S[x]/T$.
\item We can factor the image of $t_2^{(i)}$ over each $\Q[z_1]/a_i$ into monic relatively prime irreducibles $t_2^{(i)} = b_1^{(i)}b_2^{(i)}\cdots b_{s_2}^{(i)}$. Note that changing rings from $\Q[z_1]/t_1$ to $\Q[z_1]/a_i$ only involves division by $a_i$, and hence the only primes introduced into denominators can come from $\den(a_i)$. 
\item By the induction hypothesis, any prime $p$ dividing $\den(b_j^{(i)})$ is either not a radical prime of the triangular set $\{a_i\}$ or comes from $\den(t_2^{(i)})$. If $\{a_i\}$ isn't radical modulo $p$, then neither is $\{t_1\}$, clearly.
\item Use this to decompose $k[z_1,z_2]/T_2$ into fields $\Q[z_1,z_2]/\vect{a_i, b_j^{(i)}}$ where $a_i,b_j^{(i)}\in F_S[x]/T$.
\item Repeat to get $\Q[z_1,\dots,z_{n}]/T \cong \prod\Q[z_1,\dots,z_{n}]/T^{(i)}$ where each $\Q[z_1,\dots,z_{n}]/T^{(i)}$ is a field and $T^{(i)}\subset F_S$ using the induction hypothesis.
\end{enumerate}
Let $f^{(i)} = f \mod T^{(i)}$ and similarly $u^{(i)} = u \mod T^{(i)}$.
Since $\Q[z_1,\dots,z_{n}]/T^{(i)}$ is an algebraic number field, any prime $p$ occuring in $\den(u^{(i)})$ must either divide the discriminant $\Delta(\Q[z_1,\dots,z_{n}]/T^{(i)})$ or $\den(f^{(i)})$. This implies $p$ must be nonradical with respect to $T^{(i)}$ or divide $\den(f^{(i)})$. (To be more explicit, one could write $\Q[z_1,\dots,z_{n}]/T^{(i)} = \Q(\alpha)$ and note that $p\divs \Delta(\Q[z_1,\dots,z_{n}]/T^{(i)})$ which divides the discriminant $\Delta(m_{\alpha,\Q})$ of the primitive minimal polynomial $m_{\alpha,\Q}$ of $\alpha$. If $p\divs \Delta(m_{\alpha,\Q})$, then $m_{\alpha,\Q}(z)$ isn't square-free and so $\Z_p[z]/m_{\alpha,\Q}$ would contain a nilpotent element.) 

Of course $\den(u^{(i)}) \neq \den(u)$. It remains to show that going from $\prod\Q[z_1,\dots,z_{n}]/T^{(i)}$ to $\Q[z_1,\dots,z_n]/T$ only introduces primes in the denominators that are divisors of $\den(f)$ or nonradical.
This will follow from using iterated resultants similarly to the resultants in the base case. Suppose we are trying to combine $T^{(i)}$ and $T^{(j)}$ with all $t_k^{(i)} = t_k^{(j)}$ besides $t_n^{(i)} \neq t_n^{(j)}$. Now, perform the iterated resultant and write
$$
r = \iterres(\res(t_n^{(i)},t_n^{(j)}),{T_{n-1}^{(i)}}) = At_n^{(i)} + B t_n^{(j)}
$$
with $A,B\in F_S[x]$ since $t_n^{(i)},t_n^{(j)}\in F_S[x]$ are by construction. Well, any prime $p$ that divides $r$ would have the property of $\gcd(t_n^{(i)},t_n^{(j)}) \neq 1 \pmod{p}$. Hence $t_n$ wouldn't be square-free and so $T$ wouldn't be radical mod $p$.
Thus, after recovering all splittings into the ring $\Q[z_1,\dots,z_n][x]/T$, we indeed get $u\in F_S[x]$.
\end{proof}

\begin{theorem}
Let $R = \Q[z_1,\dots,z_n]/T$ where $T$ is a radical zero-dimensional triangular set and let $a,b\in R[x]$. The modular algorithm using Hensel lifting to handle zero-divisors outputs a correct c-gcd if run on $a$ and $b$.
\end{theorem}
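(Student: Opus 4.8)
The plan is to prove termination and correctness separately, with correctness by induction on the degree $\delta=\prod_i\mdeg(t_i)$ of $T$. Throughout, a run of ModularC-GCD ends in one of two ways: it \emph{splits}, returning the concatenation of ModularC-GCD($a,b$) over two triangular sets $T^{(w)},T^{(v)}$ obtained from a factorization $t_k\equiv wv\pmod{T_{k-1}}$ lifted to $\Q$; or it returns a single monic polynomial $h$, which by the convention stated above is read as the one-element tuple $(T,h)$.

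\textbf{Termination.} I would start from the preceding Proposition, which guarantees that only finitely many zero-divisors are ever encountered in the whole recursive execution. Each split replaces $T$ by $T^{(w)},T^{(v)}$ with $\mdeg(w)+\mdeg(v)=\mdeg(t_k)$ and $1\le\mdeg(w),\mdeg(v)<\mdeg(t_k)$, so both have degree strictly below $\delta$; hence the (binary) tree of recursive calls is finite. Inside a call in which no split occurs, only finitely many primes are bad and, by Theorem \ref{thm:unluckyprimes}, only finitely many are unlucky. The Hensel bound $B$ in HandleZeroDivisorHensel is squared on each invocation, and Lemma \ref{lemma:denoffators} applied with $f=t_k$ (monic) shows every monic factor of $t_k$ has coefficients lying in $F_S$ for a \emph{finite} prime set $S$; so once $B$ exceeds the size of such coefficients, any \emph{liftable} zero-divisor is lifted and forces a split — impossible in a no-split call — whence after finitely many primes no zero-divisor is seen at all. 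From there ModularC-GCD is the classical modular-gcd loop: it accumulates monic images $g_p=\gcd(a,b)\bmod\langle T,p\rangle$ of minimal $x$-degree by Chinese remaindering, and since $\gcd(a,b)$ exists in $R[x]$ with only finitely many primes in the denominators of its coefficients, rational reconstruction eventually returns it, the division test succeeds, and the loop halts.

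\textbf{Correctness, split case.} If HandleZeroDivisorHensel returns $t_k\equiv wv\pmod{T_{k-1}}$ over $\Q$ with $w,v$ monic in $z_k$, then by the lemma that $t_n\equiv uv$ over a radical triangular set forces $\gcd(u,v)=1$, the ideals $\langle T^{(w)}\rangle,\langle T^{(v)}\rangle$ are comaximal and intersect in $\langle T\rangle$, so $R\cong R^{(w)}\times R^{(v)}$ by the CRT; both $T^{(w)},T^{(v)}$ stay radical and zero-dimensional (each $R^{(\cdot)}$ is a sub-product of the fields composing $R$) and have degree $<\delta$. By the induction hypothesis each recursive call returns a correct c-gcd, and since a c-gcd of $a,b$ over a product ring is precisely the concatenation of c-gcds over the factors (immediate from the definition of c-gcd and composition of the projections), the concatenation is a correct c-gcd of $a,b$ over $T$.

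\textbf{Correctness, no-split case, and the main obstacle.} The returned $h$ is monic: it is obtained by rational reconstruction from a CRT combination of monic polynomials of a common $x$-degree $dg$, and its leading-coefficient entry $1$ reconstructs to $1$. The algorithm also verified $h\divs a$ and $h\divs b$ in $R[x]$. Write $G^\ast=\gcd(a,b)\in R[x]$ (it exists by the corollary) and, via $R\cong\prod_iF_i$, $G^\ast=(g_i)_i$ with $g_i=\gcd(\pi_ia,\pi_ib)$ monic. From $h\divs a$, $h\divs b$ and the defining property of a gcd, $h\divs G^\ast$, so each $h_i:=\pi_i(h)$ is monic of degree $dg$ and divides $g_i$, giving $dg\le\deg g_i$. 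For the reverse inequality I would use that a run of MonicEuclideanC-GCD modulo a prime $p$ that does \emph{not} flag a zero-divisor forces every intermediate remainder to have all its $F_i$-components of equal degree — otherwise the corresponding leading coefficient is a zero-divisor modulo $\langle T,p\rangle$ — so in particular all $\deg(g_i\bmod p)$ agree, hence (for $p$ outside a finite exceptional set on which degrees and the field decomposition are preserved) all $\deg g_i$ agree, say equal $d$, and $dg=\deg g_p=d$. Thus $dg=\deg g_i$ for all $i$, so $h_i=g_i$ and $h=G^\ast$; in particular $\lc(h)=1$ is a unit and $(T,h)$ is a correct c-gcd. The same dichotomy closes the induction: if the $g_i$ do not all have equal degree, every good prime yields a zero-divisor in MonicEuclideanC-GCD which — being the image of a nontrivial monic factor of some $t_k$ with coprime cofactor — is eventually Hensel-lifted into a split, so we are never genuinely in the no-split case. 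I expect the hardest part to be making this last dichotomy precise: identifying the finite set of primes that must be excluded (bad, nonradical, unlucky, primes in denominators of the relevant cofactors, primes collapsing $R\cong\prod F_i$) and checking its interaction with the geometrically growing bound $B$, so that a clean modular Euclidean run really does certify that $h=\gcd(a,b)$ and not a proper divisor; everything else follows from the lemmas already in hand.
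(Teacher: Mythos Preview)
Your split case and termination are fine; the divergence from the paper is entirely in the no-split case, and it is substantive. The paper does \emph{not} run your dichotomy. It works directly with $g=\gcd(a,b)\in R[x]$ and case-splits on $\lc(g)$. If $\lc(g)$ is a unit it makes $g$ monic and applies Lemma~\ref{lemma:denoffators} to $g$ as a monic factor of the monic $b$: this yields $p\nmid\den(g)$ for \emph{every} prime $p$ the algorithm actually used (radical, not bad), with no luckiness hypothesis whatsoever, whence $\bar g\mid g_p$ and $\deg g\le dg$. If $\lc(g)$ is a zero-divisor the paper does \emph{not} claim the algorithm must have split; instead it performs a purely \emph{theoretical} splitting $t_k=uv$ inside the proof (extracting a monic $u$ from the iterated leading coefficients of $g$), applies Lemma~\ref{lemma:denoffators} again to $u,v$ as monic factors of $t_k$ to see that the same split persists modulo $p$, checks that both $h$ and $g_p$ restrict correctly to each piece, and iterates until the image of $g$ is monic in every leaf. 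Correctness then follows leaf by leaf from the unit case.

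Your componentwise route has a real gap at ``for $p$ outside a finite exceptional set \dots\ all $\deg g_i$ agree.'' First, a clean run of MonicEuclideanC-GCD mod $p$ forces equal degrees across the \emph{mod-$p$} field components $K_j$ of $R/\langle p\rangle$, not across your over-$\Q$ components $F_i$; you are silently assuming the two decompositions are compatible for the prime at hand. Second, even granting that, deducing $\deg g_i\le dg$ requires $g_i$ to reduce mod $p$, and Lemma~\ref{lemma:denoffators} applied inside $F_i$ bounds $\den(g_i)$ only by primes nonradical for $T^{(i)}$ or dividing $\den(b_i)$ --- neither of which is what the algorithm actually screens for. The paper's device sidesteps both issues by never leaving $R[x]$ and by invoking Lemma~\ref{lemma:denoffators} only on monic objects over the current $T$, so that the algorithm's prime filter (radical for $T$, not bad for $a,b,T$) is exactly what the lemma needs. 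Your dichotomy --- that unequal $\deg g_i$ force an eventual algorithmic split --- is a strictly stronger statement than correctness requires, and proving it rigorously demands precisely the bookkeeping you flagged as hardest; the paper's theoretical-split argument simply avoids it.
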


\begin{proof}
It is enough to prove this for a single component of the decomposition. For ease of notation, let $T\subset R$ be the triangular set associated to this component. In particular, let $h$ be the monic polynomial returned from the modular algorithm modulo a triangular set $T$ and $g = \gcd(a,b) \pmod{T}$ over $\Q$.
First, we may assume that $b$ is monic. If $\lc_x(b)$ is a unit, divide through by it's inverse and this doesn't change $\gcd(a,b)$. If $\lc_x(b)$ is a zero-divisor, the EA mod $p$ would catch it and cause a splitting, contradicting that the EA mod $p$ didn't encounter a zero-divisor in this component of the $\cgcd$.

Since $h$ passed the trial division in step 34, it follows that $h \divs g$ and hence $\deg(h) \leq \deg(g)$ since $h$ is monic.
Suppose $\lc(g)$ is invertible. If so, make $g$ monic without loss of generality.
Let $p$ be a prime used to compute $h$. Since $g$ is monic and divides $b$ which is also monic, any prime appearing in $\den(g)$ is either nonradical or a divisor of $\den(b)$ by Lemma \ref{lemma:denoffators}. In particular, since the prime $p$ was used successfully to compute $h$, it can't occur in the denominator of $g$. So, we may reduce $g$ modulo $p$. Let $\overline{f}$ denote the reduction of a polynomial $f\in R[x]$ mod $p$. Since $\overline{g}\divs \overline{a}$ and $\overline{g}\divs\overline{b}$, it follows that $\overline{g}\divs \overline{h}$ and so $\deg(g)\leq \deg(h)$. Since $h\divs g$, they have the same degree, and both are monic, it must be that $h = g$ and so indeed $h$ is a greatest common divisor of $a$ and $b$.

Suppose $\lc(g)$ was a zero-divisor and that $\mvar(\lc(g)) = z_n$ without loss. Inspect $\lc_{z_n}(\lc(g))$; if this is a unit, make it monic. If it's a zero-divisor, inspect $\lc_{z_{n-1}}(\lc_{z_n}(g)))$. Continue until $u = \lc_{z_{k+1}}(\cdots (\lc_{z_n}(\lc_x(g))\cdots)$ is a monic zero-divisor. Further, if $\gcd(u,t_{k}) \neq u$, then $u/\gcd(u,t_k)$ is a unit and so we can divide through by it to ensure $\gcd(u,t_k) = u$. Let $t_k = uv \pmod{T_{k-1}}$ be a monic factorization. Note that Lemma \ref{lemma:denoffators} guarantees that the same factorization $\overline{u}\overline{v} = \overline{t_k} \pmod{T_{k-1},p}$ occurs modulo $p$. Hence, we can split $T$ into triangular sets $T^{(u)}$ and $T^{(v)}$ where $t_k$ is replaced by $u$ and $v$, respectively, and this same splitting occurs modulo $p$.

Let $g_u = g \mod T^{(u)}$ and $g_v = g \mod T^{(v)}$ and similarly for other relevant polynomials. It's straightforward to show that $\overline{h_u}$ is still a gcd of $\overline{a_u}$ and $\overline{b_u}$ and $g_u$ for $a_u$ and $b_u$. Now, we consider both triangular sets $T^{(u)}$ and $T^{(v)}$. First, in $T^{(v)}$, $u$ is invertible otherwise $T$ wouldn't be radical. So, multiply $g_v$ by $u^{-1}$ so that $\lc_{z_{k+1}}(\cdots (\lc_{z_n}(\lc_x(g))\cdots) = 1$. Reinspect $w = \lc_{z_{k+2}}(\cdots (\lc_{z_n}(\lc_x(g_v)))\cdots)$. If $w$ isn't a zero-divisor, multiply through by it's inverse and repeat until a zero-divisor is encountered as a leading coefficient. Do the same computations to find another splitting and be in the same situation as that of $u$ in $T$. Otherwise, in $T^{(u)}$, $u = 0$ and so $\lc_{z_{k+1}}(\cdots (\lc_{z_n}(\lc_x(g_u))\cdots)$ has changed; if it's invertible, multiply through by it's inverse until a monic zero-divisor is found in the leading coefficient chain. We again wind up in the situation with a monic factorization of $t_j$ that is reducible modulo $p$.

The process described in the last paragraph must terminate with a splitting in which the image of $g$ is monic since $\lc_x(g)$ has finite degree in each variable. We have already shown that the image of $h$ would be an associate of the image in $g$ in this case. Since being a $\gcd$ persists through isomorphisms, this gives indeed that $h$ is a $\gcd(a,b)$ modulo $T$, as desired.
\end{proof}

\section{Comparison with {\tt RegularGcd}}

We have implemented algorithm {\tt ModularC-GCD} as presented above using 
Maple's \textsc{recden} package which uses a recursive dense data structure
for polynomials with extensions. 
Details can be found in Monagan and van Hoeij's paper \cite{hoeij}.
The reader may find our Maple code for our software there together with several examples and their output
at \verb+http://www.cecm.sfu.ca/CAG/code/MODGCD+.

The remainder of this section will be used to compare our 
algorithm with the {\tt RegularGcd} algorithm (see \cite{gcdregchains})
which is in the {\tt RegularChains} package of Maple.
Algorithm {\tt RegularGcd} computes a subresultant polynomial remainder sequence
and outputs the last non-zero element of the sequence.
We highlight three differences between the output of {\tt RegularGcd} and {\tt ModularC-GCD}.
\begin{enumerate}
\item The algorithms may compute 
different triangular decompositions of the input triangular set.
\item {\tt RegularGcd} returns the last non-zero subresultant but not 
reduced modulo $T$; it often returns a gcd $g$ with $\deg_{z_i}(g) > \mdeg(t_i)$. 
To compute the reduced version, the procedure {\tt NormalForm} is required.
{\tt ModularC-GCD} uses the CRT and rational reconstruction on images of
the $\cgcd$ modulo multiple primes, so it computes the reduced version
of the $\cgcd$ automatically.
\item {\tt RegularGcd} computes gcds up to units, and for some inputs the units can be large.
{\tt ModularC-GCD} computes the monic gcd which may have large fractions. 
\end{enumerate}

\begin{ex}
We'd like to illustrate the differences with an example by an anonymous referee of an earlier version of this paper. 
Let
\begin{align*}
T &= \{x^3-x,~~ {y}^{2}-\tfrac{3}{2}y{x}^{2}-\tfrac{3}{2}yx+y+2{x}^{2}-2\}, \\
a &= z^{2}-\tfrac{8}{3}zy{x}^{2}+3zyx-\tfrac{7}{3}zy-\tfrac{1}{3}z{x}^{2}+3zx-\tfrac{5}{3}z+\tfrac{25}{6}y{x}^{2}-\tfrac{13}{2}yx+\tfrac{10}{3}y+\tfrac{16}{3}{x}^{2}-2x-\tfrac{10}{3}, \\
b &= {z}^{2}+\tfrac{29}{12}zy{x}^{2}+\tfrac{7}{4}zyx-\tfrac{11}{3}zy-\tfrac{8}{3}z{x}^{2}+3zx+\tfrac{2}{3}z+\tfrac{67}{12}y{x}^{2}-\tfrac{11}{4}yx-\tfrac{13}{3}y-\tfrac{13}{3}{x
}^{2}-2x+\tfrac{19}{3}.
\end{align*}
When we run our algorithm to compute $\cgcd(a,b) \pmod{T}$, it returns
\begin{align*}
z^2 + (3x-2)z - 2x+2 &\pmod{y,x^2-1}, \\
z + \tfrac{1}{2}x - \tfrac{3}{2} &\pmod{y - \tfrac{3}{2}x - \tfrac{1}{2},x^2-1}, \\
z+5 &\pmod{y+2,x}, \\
1 &\pmod{y-1,x}.
\end{align*}
The same example using {\tt RegularGcd} returns
\begin{align*}
\left( -96\,y+168 \right) z-552\,y+696 &\pmod{y+2,x},  \\
 154368\,{y}^{3}-117504\,{y}^{2}-559872\,y+585216 &\pmod{y-1,x}, \\
z^2+(\tfrac{2}{3} - \tfrac{8}{3}x^2+3x)z &\pmod{y,x-1}, \\
 (366x^2-90x-96)yz + (102x^2 + 270x-552)y &\pmod{y-2,x-1}, \\
z^2 + (\tfrac{2}{3} - \tfrac{8}{3}x^2+3x)z+\tfrac{19}{13}-\tfrac{13}{3}x^2-2x &\pmod{y,x+1}, \\
 (366x^2-90x-96)yz+(102x^2+270x-552)y &\pmod{y+1,x+1}.
\end{align*}
As can be seen, our algorithm only decomposes $T$ into $4$ triangular sets while {\tt RegularGcd} decomposes $T$ into $6$. Further, it's easy to notice that each component in our output is reduced, while the output of {\tt RegularGcd} isn't. Applying the {\tt NormalForm} command to reduce the output of {\tt RegularGcd} returns
\begin{align*}
360z + 1800 &\pmod{y+2,x}, &
62208 &\pmod{y-1,x}, \\
z^2+z &\pmod{y,x-1}, &
360z-360 &\pmod{y-2,x-1}, \\
z^2-5z+4 &\pmod{y,x+1}, &
-360z + 720 &\pmod{y+1,x+1}.
\end{align*}
Notice that it circumvents fractions. 
In general, the output of our algorithm deals with smaller numbers. This can certainly be seen as an advantage for a user.
\end{ex}


Finally, we'd like to conclude with some timing tests which show the power of
using a modular GCD algorithm that recovers the monic $\cgcd$ from images modulo
primes using rational reconstruction.
We first construct random triangular sets where each $t_i$ is monic
in $z_i$ and dense in $z_1,\dots,z_{i-1}$ with random two digit coefficients.
We then generate $a,b,g\in R[x]$ with degrees $6,5,$ and $4$, respectively.
Then, compute $\cgcd(A,B)$ where $A = ag$ and $B = bg$.
Maple code for generating the test inputs is included on our website.

\begin{table}[!htb]
\begin{center}
\begin{tabular}{cc|rrc|rrc}  \hline
    \multicolumn{2}{c|}{ extension } &  \multicolumn{3}{c|}{ModularC-GCD} & \multicolumn{3}{c}{\tt RegularGcd} \\
    $n$&         degrees  &     time &   divide & \#primes &   time  real &          cpu &  \#terms \\ \hline
     1 &              [4] &    0.013 &    0.006 &        3 &        0.064 &        0.064 &      170 \\
     2 &           [2, 2] &    0.029 &    0.022 &        3 &        0.241 &        0.346 &      720 \\
     2 &           [3, 3] &    0.184 &    0.138 &       17 &         1.73 &        4.433 &     2645 \\
     3 &        [2, 2, 2] &    0.218 &    0.204 &        9 &       10.372 &       29.357 &     8640 \\
     2 &           [4, 4] &    0.512 &    0.391 &       33 &       12.349 &       40.705 &     5780 \\
     4 &     [2, 2, 2, 2] &    1.403 &    1.132 &       33 &      401.439 &      758.942 &   103680 \\
     3 &        [3, 3, 3] &    2.755 &    1.893 &       65 &       413.54 &      1307.46 &    60835 \\
     3 &        [4, 2, 4] &    1.695 &    1.233 &       33 &       39.327 &       86.088 &    19860 \\
     1 &             [64] &    6.738 &    5.607 &       65 &       43.963 &      160.021 &     3470 \\
     2 &           [8, 8] &   13.321 &   11.386 &      129 &      1437.76 &      5251.05 &    30420 \\
     3 &        [4, 4, 4] &   17.065 &   14.093 &      129 &      7185.85 &      22591.4 &   196520 \\ \hline
\end{tabular}

\caption{ \small The first column is the number of algebraic variables,
the second is the degree of the extensions,
the third is the CPU time it took to compute $\cgcd$ of the inputs for ModularC-GCD,
the fourth is the CPU time in ModularC-Gcd spent doing trial divisions over $\Q$,
the fifth is the number of primes needed to recover $g$,
the sixth is the real time it took for {\tt RegularGcd} to do the same computation,
the seventh is the total CPU time it took for {\tt RegularGcd} and the last 
is the number of terms in the unnormalized gcd output by {\tt RegularGcd}. 
All times are in seconds.  }
\end{center}
\end{table}

In the previous dataset, $g$ isn't created as a monic polynomial in $x$,
but ModularC-GCD computes the monic $\gcd(A,B)$. Since $\lc(g)$ is a random polynomial,
its inverse in $R$ will likely have very large rational coefficients, and so additional
primes have to be used to recover the monic gcd. This brings us to an important 
advantage of our algorithm: it is output-sensitive. In Table 2 below $g$ is a monic
degree 4 polynomial with $a$ and $b$ still of degree $6$ and $5$. You'll notice that
our algorithm finishes much faster than the earlier computation, while {\tt RegularGcd}
takes about the same amount of time. This happens because the coefficients of
subresultants of $A$ and $B$ are always large no matter how small the coefficients
of $\gcd(A,B)$ are.

\begin{table}[!htb]
\begin{center}
\begin{tabular}{cc|rrc|rrc}  \hline
    \multicolumn{2}{c|}{ extension } &  \multicolumn{3}{c|}{ModularC-GCD} & \multicolumn{3}{c}{\tt RegularGcd} \\
    $n$&         degrees  &     time &   divide & \#primes &   time  real &          cpu &  \#terms \\ \hline
     1 &              [4] &     0.01 &    0.006 &        2 &        0.065 &        0.065 &      170 \\
     2 &           [2, 2] &     0.02 &    0.016 &        2 &        0.238 &        0.329 &      715 \\
     2 &           [3, 3] &    0.048 &    0.041 &        2 &        1.771 &        4.412 &     2630 \\
     3 &        [2, 2, 2] &     0.05 &    0.041 &        2 &       11.293 &       31.766 &     8465 \\
     2 &           [4, 4] &    0.077 &    0.068 &        2 &       11.521 &       36.854 &     5750 \\
     4 &     [2, 2, 2, 2] &    0.117 &    0.097 &        2 &      321.859 &      431.368 &    99670 \\
     3 &        [3, 3, 3] &    0.222 &    0.201 &        2 &      508.465 &      1615.28 &    57645 \\
     3 &        [4, 2, 4] &     0.05 &    0.032 &        2 &       34.358 &       71.351 &    16230 \\
     1 &             [64] &    0.304 &    0.282 &        2 &        27.55 &       98.354 &     3450 \\
     2 &           [8, 8] &    0.482 &    0.455 &        2 &       1628.7 &      5979.51 &    29505 \\
     3 &        [4, 4, 4] &    0.525 &    0.477 &        2 &      2989.18 &      4751.04 &   192825 \\ \hline
\end{tabular} \\ \medskip
\caption{ \small The columns are the same as for Table 1 }
\end{center}
\end{table}

Let $d_a=\deg_x a$, $d_b=\deg_x b$ with $d_a \ge d_b$ and let $d_g = \deg_x g$.
In Table 3 below we increased $d_a$ and $d_b$ from $6$ and $5$ in Table 1 to $9$ and $8$
leaving the degree of $g$ at 4.
By increasing $d_b$ we increase the number of steps in the Euclidean algorithm
which causes an expression swell in {\tt RegularGcd} in the size of the integer coefficients
and the degree of each $z_1,\dots,z_n$, that is, the expression swell is $(n+1)$ dimensional.
The number of multiplications in $R$ that the monic Euclidean algorithm does is at most
$(d_a-d_b+2)(d_g+d_b)$ for the first division and $\sum_{i=d_g}^{d_g+d_b-1} 2i  = d_b(d_b+2d_g-1)$
for the remaining divisions.  
The trial divisions of $A$ by $g$ and $B$ by $g$ cost $d_a d_g$ and $d_b d_g$ multiplications in $R$ respectively.
Increasing $d_a,d_b,d_g$ from $6,5,4$ in Table 1 to $9,8,4$ increases
the number of multiplications in $R$ in the monic Euclidean algorithm from 87 to 156
and from $24+20=44$ to $36+32=68$ for the trial divisions but the monic gcd 
remains unchanged.  Comparing Table 1 and Table 3 the reader can see that the 
increase in ModularC-GCD is less than a factor of 2.

\begin{table}[!htb]
\begin{center}
\begin{tabular}{cc|rrc|rrc}  \hline
    \multicolumn{2}{c|}{ extension } &  \multicolumn{3}{c|}{ModularC-GCD} & \multicolumn{3}{c}{\tt RegularGcd} \\
    $n$&         degrees  &     time &   divide & \#primes &   time  real &          cpu &  \#terms \\ \hline
     1 &              [4] &    0.021 &    0.011 &        5 &        0.124 &         0.13 &      260 \\ 
     2 &           [2, 2] &    0.043 &    0.031 &        5 &        0.968 &        1.912 &     1620 \\ 
     2 &           [3, 3] &    0.214 &    0.163 &       17 &       10.517 &       34.513 &     6125 \\ 
     3 &        [2, 2, 2] &    0.287 &    0.204 &        9 &       64.997 &       173.53 &    29160 \\ 
     2 &           [4, 4] &    0.638 &    0.427 &       33 &       67.413 &      245.789 &    13520 \\ 
     4 &     [2, 2, 2, 2] &     2.05 &    1.613 &       33 &      2725.13 &      3528.41 &   524880 \\ 
     3 &        [3, 3, 3] &     3.35 &    2.731 &       33 &      3704.61 &      11924.0 &   214375 \\ 
     3 &        [4, 2, 4] &    2.399 &    1.793 &       33 &      334.201 &      869.116 &    68940 \\ 
     1 &             [64] &   10.097 &    8.584 &       65 &      171.726 &      658.518 &     5360 \\
     2 &           [8, 8] &   21.890 &   18.086 &      129 &      10418.4 &      38554.9 &    72000 \\
     3 &        [4, 4, 4] &   37.007 &   31.369 &      129 &     $>50000$ &          --  &      --  \\ \hline
\end{tabular} \\ \medskip
\caption{ \small The columns are the same as for Table 1 }
\end{center}
\end{table}

\section{Complexity Analysis}

We'd like to conclude with a complexity analysis for our algorithm.
Let $R = k[z_1,\dots,z_n]/T$ where $k$ is a field.
To start, we prove a tight bound on the number of field multiplications in $k$
it takes to multiply two polynomials in $R$. 
We assume the inputs are reduced.
We will need this later when doing an asymptotic analysis of the modular algorithms.

Let $\delta$ be the degree of a triangular set $T$ with $n$ variables. 
To multiply two polynomials modulo a triangular set, 
the obvious approach is to multiply out the polynomials and then reduce. 
The reduction step involves doing divisions by the polynomials in the triangular set. 
The way these divisions are done has a large impact on the total number of operations.
We illustrate by describing the classical approach as outlined in \cite{trisetarithmetic}.
We will assume $a$ and $b$ are reduced and dense in all variables.
Let $d_i = \mdeg(t_i)$ for all $i$. 
First, view $a$ and $b$ as polynomials in $z_n$ with coefficients modulo $T_{n-1}$. 
Multiplying $ab$ modulo $T_{n-1}$ involves recursively multiplying all pairs 
of coefficients from $a$ and $b$ and reducing modulo $T_{n-1}$. 
There are $d_n^2$ such pairs and the result is a polynomial $c$ with $\deg_{z_n}(c) = 2(d_n-1)$
with coefficients reduced with respect to $T_{n-1}$.
Next, we have to divide $c$ by $t_n$.
If one uses the high school division algorithm, this involves scaling $d_n$ coefficients 
of $t_n$ for $\deg_{z_n}(c) - d_n+1$ iterations for a total of $d_n(d_n-1)$ recursive multiplications modulo $T_{n-1}$.

Let $M(n)$ be the number of field multiplications used during a multiplication of $a$ and $b$ modulo $T$. The algorithm described above does $d_n^2 + d_n(d_n-1)$ multiplications modulo $T_{n-1}$ each costing $M(n-1)$ field multiplications. This gives a recurrence 
$$
M(n) \leq  (d_n^2 + d_n(d_n-1))M(n-1)
$$
If there are no extensions, it takes a single field multiplication so that $M(0) = 1$. It is straightforward to solve this to get $M(n) = O(2^n\delta^2)$. This is as stated in \cite{trisetarithmetic} for the classical multiplication algorithm. We show that it can in be done in $O(\delta^2)$ field multiplications in Proposition \ref{mulcount:mul}. It should be noted that one normally assumes $\mdeg(t_i) \geq 2$ since extensions by linear polynomials are trivial. With that in mind, the classical multiplication algorithm is $O(\delta^3)$. 
We mention this because it should be clear that Proposition 3 does not turn 
an exponential-time algorithm into a quadratic one, 
but rather a cubic algorithm into a quadratic one.

We would like to note that we have done an actual field multiplication count 
(in our code) and we got the exact same result in the dense case as the proposition states. 
The key idea of the optimization is to do as few recursive reductions as possible.
The idea was originally done by Monagan in \cite{inplace} for the ring $\Z_n$
with $n$ too big for a single machine word.

\medskip\begin{prop} \label{mulcount:mul}
Let $M(n)$ be the number of field multiplications required to multiply $a,b\in k[z_1,\dots,z_n]/T$ and reduce by the triangular set $T$. Let $\mdeg(t_i) = d_i$ and define $\delta_1 = d_1$, $\delta_2 = d_1d_2$, and so on ending with $\delta_n = d_1d_2\cdots d_n = \delta$. Then 
\begin{equation} 
M(n) \leq \delta_n^2 + \sum_{k=1}^n \delta_k^2 \frac{d_k-1}{d_k}\prod_{j=k+1}^n (2d_j-1)
\end{equation}
which is exact in the dense case. Further, $M(n) \leq 3\delta^2$.
\end{prop}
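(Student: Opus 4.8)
The plan is to prove both claims by exhibiting the ``smart'' multiplication promised above and counting its field multiplications. First I would compute the product $c=ab$ in the \emph{polynomial} ring $k[z_1,\dots,z_n]$, doing no reductions at all; since $a,b$ are reduced they have at most $\delta=\delta_n$ terms each, so this costs at most $\delta_n^2$ field multiplications and produces a $c$ with $\deg_{z_i}(c)\le 2d_i-2$ for every $i$. Writing $\rho(n)$ for the number of field multiplications needed to reduce such a polynomial modulo $T$ by the procedure described next, we get $M(n)\le\delta_n^2+\rho(n)$. To reduce $c=\sum_{m=0}^{2d_n-2}c_m z_n^m$ (with $c_m\in k[z_1,\dots,z_{n-1}]$, $\deg_{z_i}(c_m)\le 2d_i-2$): for $m=2d_n-2,2d_n-3,\dots,d_n$, first reduce $c_m$ modulo $T_{n-1}$ and then, writing $t_n=z_n^{d_n}+\sum_{i=0}^{d_n-1}\gamma_i z_n^i$, replace $c_{m-d_n+i}$ by $c_{m-d_n+i}-c_m\gamma_i$ for $i=0,\dots,d_n-1$; finally reduce each of $c_0,\dots,c_{d_n-1}$ modulo $T_{n-1}$ and collect the result.

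The bookkeeping I would verify carefully is that the bound $\deg_{z_i}(c_m)\le 2d_i-2$ is an invariant of this loop: when we reach index $m$ the coefficient $c_m$ still satisfies it (its initial value does, and each earlier update subtracts a product $c_{m'}\gamma_i$ with $c_{m'}$ already reduced and $\gamma_i$ reduced, hence of $z_i$-degree $\le 2d_i-2$), so reducing $c_m$ modulo $T_{n-1}$ costs at most $\rho(n-1)$, after which $c_m$ has $\deg_{z_i}<d_i$ and each fold product $c_m\gamma_i$ is a product of two polynomials with at most $\delta_{n-1}$ terms, costing at most $\delta_{n-1}^2$ field multiplications, with nothing further to reduce since it is merely added into a lower coefficient. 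The loop thus performs exactly $2d_n-1$ reductions modulo $T_{n-1}$ and $d_n(d_n-1)$ fold products, giving
$$\rho(n)\ \le\ (2d_n-1)\,\rho(n-1)+d_n(d_n-1)\,\delta_{n-1}^2,\qquad \rho(0)=0.$$
A one-line induction yields $\rho(n)=\sum_{k=1}^n d_k(d_k-1)\,\delta_{k-1}^2\prod_{j=k+1}^n(2d_j-1)$ (with $\delta_0=1$); since $d_k(d_k-1)\delta_{k-1}^2=\delta_k^2\frac{d_k-1}{d_k}$, adding the $\delta_n^2$ from forming the product gives precisely the stated formula for $M(n)$. Every inequality above is an equality when $a,b$ and all of $t_1,\dots,t_n$ are dense, which is the claimed exactness.

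For $M(n)\le 3\delta^2$ it is enough to show $\rho(n)\le 2\delta_n^2$. Put $\alpha_j=1-\frac1{d_j}\in[0,1)$, so that $\frac{2d_j-1}{d_j^2}=1-\alpha_j^2$; dividing the closed form for $\rho(n)$ by $\delta_n^2=\delta_{k-1}^2 d_k^2\prod_{j>k}d_j^2$ gives $\rho(n)/\delta_n^2=S_n$ where $S_n=\sum_{k=1}^n\alpha_k\prod_{j=k+1}^n(1-\alpha_j^2)$, and hence $S_n=\alpha_n+(1-\alpha_n^2)S_{n-1}$ with $S_0=0$. A trivial extension ($d_j=1$, i.e. $\alpha_j=0$) contributes nothing to $S_n$, so I may assume every $d_j\ge 2$, i.e. $\alpha_j\ge\frac12$. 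Now induct: $S_0=0\le 2$; if $S_{n-1}\le 1$ then $S_n\le\alpha_n+(1-\alpha_n^2)=1+\alpha_n(1-\alpha_n)\le\frac54<2$; and if $1<S_{n-1}\le 2$, then on $[\frac12,1)$ the function $\alpha\mapsto\alpha+(1-\alpha^2)S_{n-1}$ has derivative $1-2\alpha S_{n-1}\le 1-S_{n-1}<0$, so it is decreasing there and $S_n\le\frac12+\frac34 S_{n-1}\le\frac12+\frac32=2$. Thus $S_n\le 2$ always, and $M(n)\le\delta^2+\rho(n)\le 3\delta^2$.

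The part I expect to be the real obstacle is the first one: making sure the interleaving of the recursive reductions modulo $T_{n-1}$ with the division by $t_n$ keeps every intermediate coefficient inside the box $\deg_{z_i}\le 2d_i-2$, so that each recursive reduction truly costs $\le\rho(n-1)$ and each fold product $\le\delta_{n-1}^2$, and that exactly $2d_n-1$ recursive reductions occur — this ``one recursive reduction per $z_n$-coefficient rather than one per coefficient product'' is exactly what collapses the classical $O(2^n\delta^2)$ count to $O(\delta^2)$. By contrast the $3\delta^2$ bound is then a short, elementary induction once the substitution $\alpha_j=1-1/d_j$ and the identity $(2d_j-1)/d_j^2=1-\alpha_j^2$ are in place, the only care needed being to discard the $d_j=1$ extensions so that $\alpha_j\ge\frac12$.
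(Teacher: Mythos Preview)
Your proof is correct, and the algorithmic heart of it --- interleaving one recursive reduction modulo $T_{n-1}$ per $z_n$-coefficient with the fold steps of the division by $t_n$, thereby obtaining the recurrence $\rho(n)=(2d_n-1)\rho(n-1)+d_n(d_n-1)\delta_{n-1}^2$ --- is exactly the paper's approach, just phrased as a top-down fold loop rather than the paper's linear system $r=c-t_nq$; the operation counts coincide term for term.

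Where you differ is in the final inequality $\rho(n)\le 2\delta_n^2$. The paper inducts directly on the recurrence: assuming $\rho(n-1)\le 2\delta_{n-1}^2$ gives $\rho(n)\le (2d_n-1)\cdot 2\delta_{n-1}^2+d_n(d_n-1)\delta_{n-1}^2=(d_n^2+3d_n-2)\delta_{n-1}^2$, and then $3d_n-2\le d_n^2$ (equivalently $(d_n-1)(d_n-2)\ge 0$) finishes it in one line. Your route through the substitution $\alpha_j=1-1/d_j$, the identity $(2d_j-1)/d_j^2=1-\alpha_j^2$, and the two-case induction on $S_n=\alpha_n+(1-\alpha_n^2)S_{n-1}$ is correct and self-contained, but noticeably longer; its payoff is the sharper intermediate information $S_n\le 5/4$ once $S_{n-1}\le 1$, which the paper does not extract. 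If you only want the stated bound, the paper's direct induction is the shorter path.
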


\begin{proof}
Let $D(n)$ be the number of field multiplications it takes to reduce a polynomial of degree $2(d_j-1)$ in each corresponding variable by $T_n$. It is assumed that $D(n)$ works by first reducing by $t_1$, then reducing by $t_2$ modulo $T_1$, etc. Well, multiplying $ab$ will always take $\delta_n^2$ multiplications before reducing, and this is true whether or not these multiplications are done recursively or at the on-set; in particular, it follows that $M(n) = \delta_n^2 + D(n)$. We proceed by describing a division algorithm to divide $c = ab$ by $t_n$ modulo $T_{n-1}$. Let $c = c_0 + c_1z_n + \cdots + c_{2(d_n-1)}z_n^{2(d_n-1)}$ and $t_n = p_0 + p_1z_n + \cdots + z_{d_n}^{d_n}$.  We can compute the quotient $q = q_0 + \cdots + q_{d_n-2}z_n^{d_n-2}$ and remainder $r = r_0 + \cdots + r_{d_n-1}z_n^{d_{n}-1}$ via the linear system generated by $r = c - t_nq$,

\begin{align*}
q_{d_n-2} &= c_{2d_n-2}, \\
q_{d_n-3} &= c_{2d_n-3}  - q_{d_n-2}p_{d_n-1}, \\
q_{d_n-4} &= c_{2d_n-4}  - q_{d_n-3}p_{d_n-1} - q_{d_n-2}p_{d_n-2}, \\
          &\vdots \\
      q_0 &= c_{d_n}  - q_1p_{d_n-1} - \cdots - q_{d_n-2}p_2, \\
r_{d_n-1} &= c_{d_n-1} - q_0p_{d_n-1} - q_1p_{d_n-2} - \cdots - q_{d_n-2}p_1, \\
r_{d_n-2} &= c_{d_n-2} - q_0p_{d_n-2} - q_1p_{d_n-3} - \cdots - q_{d_n-3}p_0, \\
          &\vdots \\
    r_{1} &= c_{1} - q_0p_{1} - q_{1}p_0, \\
      r_0 &= c_0 - q_0p_0.
\end{align*}

We outline a method to solve the equations above. 
The key idea is to compute the entire right-hand-side before reducing by $T_{n-1}$.
This reduces the total number of reductions from quadratic in $d_n$ to linear in $d_n$.
First, set $q_{d_n-2} = c_{2d_n-2}$ and reduce by $T_{n-1}$. 
Then, multiply $q_{d_n-2}p_{d_n-2}$ over $k$ and subtract it from $c_{2d_n-3}$,
and then reduce by $T_{n-1}$ to obtain $q_{d_n-3}$.
It should be clear how to generalize this result and compute all $q_k$. 
Next, to get $r_k$, simply multiply the corresponding $q_ip_j$ over $k$ 
and end by reducing the result of the sum by $T_{n-1}$.
This reveals we only have to do a single reduction per equation 
and each reduction is of a polynomial of degree at most $2(d_j-1)$ in the corresponding variable; 
this takes at most $(2d_n-1)D(n-1)$ field multiplications. 
Multiplying each $q_ip_j$ will take $\delta_{n-1}^2$ field multiplications each. This takes
$$
(1+2+\cdots + (d_n-2))\delta_{n-1}^2 = \binom{d_n-1}{2}\delta_{n-1}^2
$$
field multiplications for computing all $q_i$ in the top $d_n-1$ rows, and
$$
\big((1+2+\cdots + (d_n-1) + (d_n-1)\big)\delta_{n-1}^2 = \left(\binom{d_n}{2}+d_n-1\right)\delta_{n-1}^2
$$
field multiplications for computing all $r_i$ in the bottom $d_n$ rows. Overall,
\begin{align*}
D(n) &= (2d_n-1)D(n-1) + \Big(\frac{(d_n-1)(d_n-2)}{2} + \frac{(d_n(d_n-1)}{2}+d_n-1\Big)\delta_{n-1}^2 \\
&= (2d_n-1)D(n-1) + d_n(d_n-1)\delta_{n-1}^2.
\end{align*}
If there are no extensions, it takes $0$ multiplications to reduce; so we may use $D(0) = 0$ as our initial condition. The solution can be found most easily using Maple's {\tt rsolve} command and some algebraic simplification. The command is

\vspace{5pt}
{\noindent\small\tt > rsolve(\{M(n) = (2*d[n]-1)*M(n-1) +  d[n]*(d[n]-1)*del(n-1)\textasciicircum 2, \\
  \ \ del(n)=del(n-1)*d[n],} {\small\tt M(0)=0, del(1)=d[1]\}, \{M(n), del(n)\});} 
\vspace{5pt}

For the lighter bound, we claim $D(n) \leq 2 \delta_n^2$. To prove this, proceed by induction on $n$. The base case $n=0$ follows from $D(0) = 0 \leq 2 = 2\delta_0$. Next,
\begin{align*}
D(n) &= (2d_n-1)D(n-1) + d_n(d_n-1)\delta_{n-1}^2 \\
     &\leq (2d_n-1)2\delta_{n-1}^2 + d_n(d_n-1)\delta_{n-1}^2 \\
     &= 4d_n\delta_{n-1}^2-2\delta_{n-1}^2 + d_n(d_n-1)\delta_{n-1}^2 \\
     &= 4d_n\delta_{n-1}^2-2\delta_{n-1}^2 + d_n^2\delta_{n-1}^2-d_n\delta_{n-1}^2 \\
     &= 3d_n\delta_{n-1}^2-2\delta_{n-1}^2 + \delta_{n}^2 \\
     &= (3d_n-2)\delta_{n-1}^2 + \delta_{n}^2.
\end{align*}
To finish, note that $3d_n-2 \leq d_n^2$ which follows from $d_n^2-3d_n+2 = (d_n-2)(d_n-1) \geq 0$ for all $d_n\in\Z$. Thus, $D(n) \leq d_n^2\delta_{n-1}^2 + \delta_n^2 = 2\delta_n^2$ and indeed $M(n) \leq \delta_n^2 + D(n) \leq 3\delta_n^2$.
\end{proof}

In \cite{trisetarithmetic}, Li et al prove that multiplication can be done in $O(4^n\delta\log(\delta)\log(\log(\delta)))$ field operations. Their method computes the coefficients by lifting modulo the ideal $\vect{x_n}$ using a Newton-iteration,
and computes the coefficients of $x_1,\dots,x_{n-1}$ recursively. 
It should be noted that the constant in their algorithm is much larger than the one in ours,
so one would expect ours to perform better for smaller degrees.
Comparing these two quantities is not obvious. 
To aid the reader we compare our bound (3) with theirs in the table below.
Because they do not give an explicit constant, we use $3$ since their proofs ensure it is smaller.
The table considers extensions of degree $\delta = d^n$ with $\mdeg(t_i) = d$. 
We give the smallest value of $n$ such that our bound exceeds theirs.

\begin{table}[!htb]
\begin{center}
\begin{tabular}{c|c|c}
\hline
$d$ & $n$ & $\delta = d^n$ \\
\hline
5 & 29 & 186264514923095703125 \\ 
6 & 14 & 78364164096 \\ 
7 & 10 & 282475249 \\ 
8 & 8 & 16777216 \\ 
9 & 6 & 531441 \\ 
10 & 5 & 100000 \\ 
12 & 4 & 20736 \\ 
16 & 3 & 4096 \\ 
28 & 2 & 784 \\ 
115 & 1 & 115 \\ \hline
\end{tabular}
\caption{ \small The first column is the main degree of each $t_i$,
the second is smallest number of extensions where our bound exceeds the bound given in \cite{trisetarithmetic},
the third is the degree of this extension $\delta = d^n$. Values of $d$ that are omitted have the same value of $n$ as the largest shown predecessor. For $d < 5$, our bound is always smaller. For $d \geq 115$, their bound is always smaller.}
\end{center}
\end{table}


Next, we present a field multiplication count for the other arithmetic operations 
we need, namely, division, inversion, and gcd.  
We will not get an exact count as in Proposition \ref{mulcount:mul},
instead focusing on asymptotics. 
We will need these when analyzing the modular gcd algorithm.
We will be using the extended Euclidean algorithm for computing inverses here,
and will only need this result when the field is $\Z_p$.
When using the Euclidean algorithm, we need to assume no zero-divisors are encountered.

\begin{prop}\label{mulcount:remainder}
Let $T\subset k[z_1,\dots,z_n]$ be a triangular set and $R = k[z_1,\dots,z_n]/T$. Let $a,b\in R[x]$ with $\deg(a) \geq \deg(b)$ and $b$ monic. Then the remainder and quotient of $a\div b$ can be computed in $O(\deg(b)(\deg(a)-\deg(b)+1)\delta^2)$ field multiplications.
\end{prop}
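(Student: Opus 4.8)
The plan is to analyze the classical ``high school'' long division algorithm for dividing $a$ by $b$ in $R[x]$, bounding its cost by counting how many multiplications in $R$ it performs and then invoking Proposition \ref{mulcount:mul} for the cost of each such multiplication.

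First I would set up the algorithm. Write $d_a = \deg(a)$ and $d_b = \deg(b)$, and recall $b$ is monic. Maintain a current remainder $r$, initialized to $a$, and produce the quotient coefficients $q_{d_a-d_b}, q_{d_a-d_b-1}, \dots, q_0$ one at a time: at step $i$ (running from $d_a - d_b$ down to $0$) set $q_i$ equal to the coefficient of $x^{d_b+i}$ in the current $r$ --- this is free because $b$ is monic, so no leading-coefficient inversion is needed --- and then replace $r$ by $r - q_i x^i b$. Since the $x^{d_b+i}$ term of $q_i x^i b$ is exactly $q_i$, it cancels the current leading term, and the update amounts to forming the products $q_i b_j$ for $j = 0,\dots,d_b-1$, i.e.\ $d_b$ multiplications in $R$, together with $d_b$ subtractions in $R$, which cost no field multiplications.

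Next I would do the accounting. There are $d_a - d_b + 1$ iterations, each performing $d_b$ multiplications in $R$. All operands can be assumed reduced modulo $T$: the coefficients of $a$ and $b$ are reduced once at the outset at a cost of $O((d_a+d_b)\delta^2)$ field multiplications, which is absorbed into the final bound, and thereafter $r$ stays reduced since a sum of reduced elements is reduced and each product $q_i b_j$ reduced by $T$ is reduced. Hence by Proposition \ref{mulcount:mul} each multiplication in $R$ costs $M(n) \le 3\delta^2$ field multiplications, and the total is at most
$$
(d_a - d_b + 1)\, d_b \cdot 3\delta^2 + O((d_a + d_b)\delta^2) = O\big(d_b(d_a - d_b + 1)\delta^2\big),
$$
as claimed; the quotient coefficients $q_i$ incur no extra cost since they are simply read off $r$.

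The only care needed is bookkeeping: ensuring every operand passed to the multiplication routine is reduced so that Proposition \ref{mulcount:mul} applies, observing that monicity of $b$ removes any inversion or scaling of leading coefficients, and checking that the one-time reduction of the input coefficients and all the additions in $R$ do not dominate. There is no substantive obstacle beyond this; the result is essentially $(\text{number of division steps}) \times (\text{number of coefficients of } b) \times (\text{cost of one multiplication in } R)$.
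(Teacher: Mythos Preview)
Your argument is correct and matches the paper's proof essentially line for line: count $\deg(a)-\deg(b)+1$ iterations of high school division, each costing $\deg(b)$ multiplications in $R$ (since $b$ is monic), and then apply Proposition~\ref{mulcount:mul} to bound each ring multiplication by $O(\delta^2)$. The extra bookkeeping you include about keeping operands reduced is fine and slightly more careful than the paper, but the core approach is identical.
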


\begin{proof}
The standard division works by multiplying the coefficients of $b$ modulo $T$ by an element of $R$ for at most $\deg(a)-\deg(b)+1$ iterations. There are $\deg(b)$ coefficients of $b$ not including the leading coefficient; note that we ignore $\lc(b)$ since we are assuming $b$ is monic. This implies that we need to do $\deg(b)(\deg(a)-\deg(b)+1)$ ring multiplications. We can do ring multiplications in $O(\delta^2)$ field multiplications by Proposition \ref{mulcount:mul}, giving the result.
\end{proof}

\begin{prop} \label{mulcount:inverse}
Let $T\subset k[z_1,\dots,z_n]$ be a triangular set and $R = k[z_1,\dots,z_n]/T$. 
Assume inverses in $k$ can be computed in a $O(1)$ field multiplications. Let $a\in R$. 
Then, assuming no zero-divisors are encountered, $a^{-1}$ can be computed in $O(\delta^2)$ field multiplications.
We use the extended Euclidean algorithm in $a$ and $t_n$ modulo $T_{n-1}$ to compute $a^{-1}$.
\end{prop}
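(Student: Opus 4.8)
The plan is to argue by induction on the number of variables $n$, reducing the inversion in $R$ to a single run of the monic extended Euclidean algorithm over the smaller ring $R' = k[z_1,\dots,z_{n-1}]/T_{n-1}$, and then bounding the cost of that run using the multiplication count of Proposition \ref{mulcount:mul} together with the induction hypothesis applied to inversions in $R'$.

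For the base case $n = 0$ we have $R = k$ and $\delta = \delta_0 = 1$, so the hypothesis that inverses in $k$ cost $O(1)$ field multiplications is exactly the assertion. For the inductive step, regard $a$ and $t_n$ as polynomials in $z_n$ with coefficients in $R'$; here $a$ is reduced, so $\deg_{z_n}(a) < d_n = \mdeg(t_n)$, and $t_n$ is monic of degree $d_n$ in $z_n$. Running the monic extended Euclidean algorithm on the pair $(t_n, a)$ over $R'$ produces cofactors $s,t \in R'[z_n]$ with $s a + t\, t_n = g$, where $g$ is the monic last nonzero remainder; since by hypothesis no zero-divisor is encountered, $g = 1$, and $s \bmod t_n$ is $a^{-1}$ in $R$, which is just the usual B\'ezout argument for correctness.

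It remains to count field multiplications. The classical analysis of the Euclidean algorithm shows that the whole computation---all the remainder steps, together with the bookkeeping needed to carry the cofactors $s,t$---uses $O(d_n^2)$ arithmetic operations in the coefficient ring $R'$, of which $O(d_n)$ are inversions (one per step, to renormalize a leading coefficient and make the next remainder monic) and the rest are multiplications in $R'$; this is the same degree count that underlies Proposition \ref{mulcount:remainder}. Each multiplication in $R'$ costs $O(\delta_{n-1}^2)$ field multiplications by Proposition \ref{mulcount:mul}, and each inversion in $R'$ costs $O(\delta_{n-1}^2)$ field multiplications by the induction hypothesis (with these recursive inversions again covered by the standing ``no zero-divisor'' assumption). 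Hence the total is $O(d_n^2)\cdot O(\delta_{n-1}^2) = O\big((d_n\delta_{n-1})^2\big) = O(\delta_n^2) = O(\delta^2)$, as claimed.

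The only real care needed---and the point I expect to be the main obstacle in writing the proof cleanly---is the bookkeeping in the operation count of the Euclidean algorithm over a coefficient ring that is not a field: one must check that the cofactor sequence stays of bounded degree, so that tracking it does not exceed the $O(d_n^2)$ bound, and that every inverse required in $R'$ is the inverse of a leading coefficient which, under the ``no zero-divisor'' hypothesis, exists and is delivered by a recursive call of exactly the form covered by the induction hypothesis. Once that is pinned down, the estimate $O(d_n^2\,\delta_{n-1}^2) = O(\delta^2)$ is immediate.
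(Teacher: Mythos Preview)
Your proposal is correct and follows essentially the same approach as the paper: induction on $n$, with the inductive step running the monic extended Euclidean algorithm on $t_n$ and $a$ over $R'=k[z_1,\dots,z_{n-1}]/T_{n-1}$, counting $O(d_n)$ recursive inversions in $R'$ (bounded by the induction hypothesis) and $O(d_n^2)$ multiplications in $R'$ (bounded via Proposition~\ref{mulcount:mul}), and combining to get $O(d_n^2\delta_{n-1}^2)=O(\delta^2)$. The paper's write-up unpacks the remainder costs a bit more explicitly as a sum over the division steps, but the structure and the arithmetic are the same.
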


\begin{proof}
Work by induction on $n$. Note that our assumption on $k$ satisfies the base case $n = 0$. Next, let $\delta_{m} = \prod_{i=1}^m \deg(t_i)$ as in Proposition \ref{mulcount:mul}. Let $I(n)$ be the number of field multiplications it takes to compute the inverse of an element with $n$ variables. Then the first step of the Euclidean algorithm is to invert $\lc(a)$. After that, we would have to invert the leading coefficient of the remainder of $t_n \div a$. Since the worst case is the degree of each successive remainder going down by $1$, this will take a total of at most $\deg(a) = \deg(t_n)-1$ recursive inversions. By Proposition \ref{mulcount:remainder}, this will take $O((\deg(t_n)-1)\delta_{n-1}^2)$ field multiplications, the next remainder will take $O((\deg(t_n)-2)\delta_{n-1})$ field multiplications, and so on. In total,
$$
I(n) = \deg(t_n)I(n-1) + \sum_{j=1}^{\deg(t_n)-1}   O(j\delta_{n-1}) = \deg(t_n)I(n-1) + O(\deg(t_n)^2\delta_{n-1}).
$$
Note that we also have to multiply through by the inverse of the leading coefficient at each step. This will take $O(\deg(t_n)^2\delta_{n-1})$ over all steps as well.

Now, the induction hypothesis states $I(n-1) = O(\delta_{n-1}^2)$. So, 
$$
I(n) = \deg(t_n)I(n-1) + O(\delta^2) = \deg(t_n)O(\delta_{n-1}^2) + O(\delta^2) = O(\delta^2),
$$
completing the inductive step. We have not counted the extra multiplications in the extended Euclidean algorithm, but this does not impact the asymptotics; see Theorem 3.11 of \cite{MCA}.
\end{proof}

\begin{prop} \label{mulcount:EA}
Let $T\subset k[z_1,\dots,z_n]$ be a triangular set and $R = k[z_1,\dots,z_n]/T$. Let $a,b\in R[x]$ with $\deg(a) \geq \deg(b)$. Then running the Euclidean algorithm on $a$ and $b$ takes $O(d_ad_b\delta^2)$ field multiplications assuming no zero-divisors are encountered.
\end{prop}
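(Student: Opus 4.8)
The plan is to mimic the structure of the proof of Proposition \ref{mulcount:inverse}, analyzing the monic Euclidean algorithm step by step and summing the cost of each division via Proposition \ref{mulcount:remainder}. Write $d_a = \deg(a)$ and $d_b = \deg(b)$, and let $r_0 = a$, $r_1 = b$, $r_2, \dots$ be the sequence of remainders produced by MonicEuclideanC-GCD. At each step we first make $r_i$ monic (which, by Proposition \ref{mulcount:inverse}, costs $O(\delta^2)$ field multiplications to invert $\lc(r_i)$, plus $O(\deg(r_i)\delta^2)$ to scale the polynomial), then compute the remainder of $r_{i-1} \div r_i$. Since $r_i$ is monic, Proposition \ref{mulcount:remainder} applies and the $i$-th division costs $O(\deg(r_i)(\deg(r_{i-1}) - \deg(r_i) + 1)\delta^2)$ field multiplications.

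The key step is then to bound the sum $\sum_i \deg(r_i)(\deg(r_{i-1}) - \deg(r_i) + 1)$ over all iterations. The crude bound is $\deg(r_i) \le d_b$ for all $i \ge 1$ (the first remainder $r_2$ already has degree $< d_b$, and degrees are non-increasing thereafter), while $\sum_i (\deg(r_{i-1}) - \deg(r_i) + 1)$ is a telescoping sum of the degree drops plus one per step; since there are at most $d_b + 1$ steps and the total degree drop is at most $d_a$, this sum is $O(d_a + d_b) = O(d_a)$. Multiplying, the total number of ring multiplications across all divisions is $O(d_a d_b)$, hence $O(d_a d_b \delta^2)$ field multiplications. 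The cost of the monic-scaling steps is $\sum_i O(\deg(r_i)\delta^2) \le O(d_b \cdot d_b \cdot \delta^2)$, which is absorbed, and the inversions contribute $O((\text{number of steps})\cdot \delta^2) = O(d_b\,\delta^2)$, also absorbed. I would also note the first division $r_0 \div r_1$ separately, costing $O(d_b(d_a - d_b + 1)\delta^2)$, which is already $O(d_a d_b \delta^2)$.

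The main obstacle — really the only subtlety — is making the telescoping argument clean: one must be careful that the ``$+1$'' contributions (one per Euclidean step) sum to $O(d_b)$ rather than something larger, which is fine since each remainder has strictly smaller degree than the previous so there are at most $d_b$ remainder steps, and that the degree-drop terms $\sum_i(\deg(r_{i-1}) - \deg(r_i))$ genuinely telescope to $d_a - \deg(r_{\text{last}}) \le d_a$. A secondary point is simply to invoke the hypothesis that no zero-divisors are encountered, so that every $\lc(r_i)$ is a unit and Propositions \ref{mulcount:remainder} and \ref{mulcount:inverse} are applicable at each step; this is exactly the assumption in the statement. No new ideas beyond those already developed are needed.
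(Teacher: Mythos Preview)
Your proposal is correct and follows essentially the same approach as the paper: separate the first division, count inversions via Proposition~\ref{mulcount:inverse}, count the monic scalings, and bound the remaining divisions via Proposition~\ref{mulcount:remainder}. The only cosmetic difference is that you bound $\sum_i \deg(r_i)(\deg(r_{i-1})-\deg(r_i)+1)$ by a telescoping argument, whereas the paper simply observes that after the first step each remainder has degree at most $d_b$ and there are at most $d_b$ of them, giving $O(d_b^2\delta^2)$ directly; both routes yield the same $O(d_ad_b\delta^2)$ total.
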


\begin{proof}
Let $d_a = \deg(a)$ and $d_b = \deg(b)$. We will have to perform at most $d_b$ remainders to complete the Euclidean algorithm. This implies we need to invert $d_b$ leading coefficients as well as $\lc(b)$. This accounts for $O(d_b\delta^2)$ field multiplications. Multiplying through by the leading coefficients will cost $O(d_b^2\delta^2) \leq O(d_ad_b\delta^2)$ field multiplications since each remainder has degree $\leq d_b$ and there are $d_b$ of them. Next, computing all but the first remainder cost a total of $O(d_b^2\delta^2)$ field multiplications since each remainder has $\leq d_b$ degree and there are $d_b$ of them in the worst case. Finally, the first remainder costs $O(d_b(d_a-d_b+1)\delta^2)$ field multiplications. Thus, the entire cost is $O(d_ad_b\delta^2 + d_b(d_a-d_b+1)\delta^2) = O(d_ad_b\delta^2)$.
\end{proof}

We will do an asymptotic analysis for the modular c-gcd algorithm that uses Hensel lifting to handle zero-divisors. The running time of the algorithm is dominated by running the Euclidean algorithm modulo multiple primes and the division test. This is verified in the previous section's timing results. Because of this, we will only consider the running time based on these two parts of the algorithm. Also, the expected case is that no zero-divisors are encountered. Further, not encountering a zero-divisor is arguably the worst case scenario. This is because if a zero-divisor is successfully lifted to $\Q$, then the degree of each component will smaller. Therefore, reduction by the triangular set takes less operations. This can also be seen in the timing tests by observing the running time with degrees $[4,4,4]$ and $[4,2,4]$ in Table 5.1 gives a ratio of about $10$.

Now, suppose $M$ primes are needed to successfully compute $g = \gcd(a,b)$. Since there are only finitely many unlucky primes, we assume the algorithm doesn't encounter any of these. This implies we need $M$ runs of the Euclidean algorithm modulo primes. This part takes a total of $O(M\deg(a)\deg(b)\delta^2)$ field multiplications modulo primes by Proposition \ref{mulcount:EA}. We could find a bound on $M$, but we do not think it is worthwhile because our algorithm is output sensitive and any bound will be bad since it has to handle the worst case. Next, the implementation of the algorithm does not perform the division test after each prime. We have coded it so that division is only tested $O(\log(M))$ times. Each division takes $O(\deg(g)\deg(b)\delta^2)$ operations over $\Q$ for a total of $O(\log(M)\deg(g)\deg(b)\delta^2)$ multiplications in $\Q$. Because we're assuming no unlucky primes are encountered, this is an expected case analysis.

We would like to discuss an optimization for the division test. It does not avoid the worst case, but it does improve the expected case. Suppose rational reconstruction successfully outputs a polynomial $h\in\Q[z_1,\dots,z_n]/T~[x]$. Instead of going straight into the division test, we can make use of a check prime. That is, we pick one more prime $p$ where $p$ is not bad or radical. Then, compute $g = \gcd(a,b) \pmod{p}$. If a zero-divisor is encountered in the radical test or in the computation of $g$, we pick a new check prime. Next, we check if $h \equiv g \pmod{p}$. If it is, we proceed to the division test. If it is not we go back to the main loop and pick more primes starting with $p$. More rigorously, we replace lines 33-35 of ModularC-GCD with the following pseudo-code.

\begin{algorithm}
\SetArgSty{text}
Set $h := $ RationalReconstruction($G \pmod M$)\;
\If{$h \ne {\rm FAIL}$}{
    {\bf Check-Prime Loop:} Pick a new prime $p$ that is not bad or radical\;
    \lIf{a zero-divisor is encountered}{pick a new prime, go to Check-Prime Loop}
    Compute $g := \gcd(a,b) {\pmod p}$\;
    \lIf{a zero-divisor is encountered}{pick a new prime, go to Check-Prime Loop}
    \lIf{$g \not\equiv h \pmod{p}$}{pick a new prime, go to Main Loop}
    \lIf{$h \divs a$ and $h \divs b$}{\Return $h$}}
Pick a new prime: Go to Main Loop\;
\end{algorithm}

This optimization only performs the division test once in the expected case. Since there are finitely many unlucky primes by Theorem \ref{thm:unluckyprimes}, the algorithm expects to always pick a lucky prime. Therefore, the only time the division test can be needlessly performed in the expected case, is if not enough primes are picked to exceed the bounded needed by rational reconstruction. The use of a check prime supersedes this since the check prime is expected to be lucky as well. Thus, we have the following theorem.

\begin{theorem}
The ModularC-GCD algorithm performs $O(M\deg(a)\deg(b)\delta^2)$ operations in $\Z_p$. \\
Additionally, it uses $O(\deg(g)\deg(b)\delta^2)$ operations over $\Q$ in the expected case, and \\
$O(\log(M)\deg(g)\deg(b)\delta^2)$ operations in the worst case.
\end{theorem}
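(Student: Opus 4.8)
The plan is to obtain the theorem by assembling the per-operation counts of Propositions~\ref{mulcount:remainder} and~\ref{mulcount:EA} with the bookkeeping already sketched above: the cost of ModularC-GCD is to be accounted for by (a)~the $M$ invocations of MonicEuclideanC-GCD modulo primes and (b)~the trial divisions over $\Q$, with all other steps (the radical-prime test, Hensel lifting, CRT, rational reconstruction) charged as lower-order work under the conventions set out at the start of this analysis. Throughout I would use that only finitely many primes are non-radical or unlucky, so the $M$ primes actually used may be taken radical and lucky, and that only finitely many zero-divisors are ever encountered (by the proposition asserting a finite number of zero-divisors arise when running ModularC-GCD); hence the counting may be done under the assumption that no zero-divisor occurs, which by the remarks preceding the theorem is also the worst case for the running time.

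\textbf{The $\Z_p$ count.} First I would bound the work modulo a single good prime $p$. IsRadicalPrime performs, for $i=1,\dots,n$, a univariate gcd of $t_i$ and $t_i'$ in $(\Z_p[z_1,\dots,z_{i-1}]/T_{i-1})[z_i]$, which by Proposition~\ref{mulcount:EA} applied in that subring costs $O(d_i^2\,\delta_{i-1}^2)=O(\delta^2)$, so $O(n\delta^2)$ in total. Then MonicEuclideanC-GCD runs on the images $\bar a,\bar b\in(\Z_p[z_1,\dots,z_n]/T)[x]$; since no zero-divisor occurs, Proposition~\ref{mulcount:EA} bounds this by $O(\deg(a)\deg(b)\delta^2)$ operations in $\Z_p$ (this already absorbs the $O(\deg(b))$ leading-coefficient inversions, each $O(\delta^2)$ by Proposition~\ref{mulcount:inverse}), which dominates the radical test. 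Multiplying by $M$ yields $O(M\deg(a)\deg(b)\delta^2)$ operations in $\Z_p$.

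\textbf{The $\Q$ count.} When rational reconstruction returns a monic candidate $h$ with $\deg(h)=dg=\deg(g)$, the algorithm tests $h\divs b$ and $h\divs a$ over $\Q$; by Proposition~\ref{mulcount:remainder} (with $h$ monic) each such division costs $O(\deg(g)\deg(b)\delta^2)$ operations over $\Q$ (the division of $a$ by $h$ is bounded the same way). It then remains to count how often this test happens. ModularC-GCD is coded so the division test is attempted only $O(\log M)$ times---roughly, each time the bit-size of the modulus crosses a new threshold---which gives the worst-case bound $O(\log(M)\deg(g)\deg(b)\delta^2)$. For the expected case I would invoke the check-prime optimization: once $M$ exceeds twice the product of the numerator- and denominator-height bounds of the coefficients of $g$, reconstruction returns exactly $g$, and $g\divs a$, $g\divs b$; before any division, the algorithm picks a check prime $p$ and compares $h$ with $\gcd(a,b)\bmod p$, and since only finitely many primes are unlucky (Theorem~\ref{thm:unluckyprimes}) or yield a zero-divisor, the check prime is expected lucky, so a spurious $h$ is rejected without a division while the correct $h$ passes into exactly one successful division test. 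Hence only $O(1)$ trial divisions occur in expectation, giving $O(\deg(g)\deg(b)\delta^2)$ operations over $\Q$.

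\textbf{Main obstacle.} The $\Z_p$ count is routine---essentially a substitution into Proposition~\ref{mulcount:EA}---and the worst-case $\Q$ count follows from the coded $O(\log M)$ division schedule. The delicate point is the expected-case claim: one must argue that the \emph{expected} number of trial divisions is $O(1)$, i.e.\ that a candidate surviving the check prime must already be the true $g$. This rests on the standard guarantee that rational reconstruction stabilizes once the modulus exceeds the (finite) height bounds for $g$, together with the finiteness of unlucky and zero-divisor primes; pinning down ``expected'' precisely here---and being explicit that Hensel lifting, CRT, and reconstruction are charged as lower-order, since a fully rigorous bound on them would need height estimates we do not develop---is the only step that requires real care.
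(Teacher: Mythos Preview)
Your proposal is correct and follows essentially the same approach as the paper: the paper's argument (given in the text immediately preceding the theorem) likewise charges the $\Z_p$ work to $M$ runs of the Euclidean algorithm via Proposition~\ref{mulcount:EA}, charges the $\Q$ work to the trial divisions with the $O(\log M)$ schedule in the worst case, and invokes the check-prime optimization together with Theorem~\ref{thm:unluckyprimes} to reduce the expected number of trial divisions to one. Your write-up is somewhat more explicit than the paper's in places (e.g.\ accounting for the IsRadicalPrime cost and citing Proposition~\ref{mulcount:remainder} for the trial-division bound), but the structure and the key ingredients are the same.
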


\section{Conclusion}

In summary, creating algorithms for computation modulo triangular sets is difficult because of zero-divisors. We have developed the technique of Hensel lifting to resolve this difficulty. We applied this to a modular gcd algorithm that we have shown this gives a practical improvement over the algorithms used in Maple's {\tt RegularChains} package. 
There is room for improvement with our algorithms that should be discussed:
\begin{enumerate}
\item We could avoid the radical prime test as done in the algebraic number field case in \cite{hoeij}.
This would not be a large gain as the radical prime test takes a small fraction of the running time.
\item The division test is a bottleneck of the algorithm and should be the first place to optimize.
We attempted to create a modular division algorithm for this; However, it did not present a gain.
The difficulty is that bounds for the size of the rational coefficients of $\gcd(a,b)$ for in $R[x]$ are too big.
\item Our modular GCD algorithm only works with univariate polynomials over $R$. The obvious way to handle multivariate
polynomials over $R$ would be to use evaluation and interpolation as is done by Brown in \cite{brown} over $\Q$ 
with no extensions. This would require proving results about uniqueness of interpolation over products of fields.
We could also make use of sparse interpolation techniques here, see \cite{hu} and \cite{zippel}.
\end{enumerate}





\end{document}